		\newcolumntype{L}{>{$}l<{$}} 							
		\newcolumntype{C}{>{$}c<{$}} 						
		\newcolumntype{R}{>{$}r<{$}} 							
	\newcommand{\ncom}{\newcommand}
	\ncom{\h}{\mathcal{H}}
	\ncom{\een}{\mathds{1}}
	\ncom{\ket}[1]{\left|#1\right\rangle}
	\ncom{\bra}[1]{\left\langle#1\right|}
	\ncom{\braket}[2]{\left\langle#1\middle|#2\right\rangle}
	\ncom{\ketbra}[2]{\left|#1\middle\rangle\middle\langle#2\right|}
	\ncom{\expv}[3]{\left\langle#1\middle|#2\middle|#3\right\rangle}
	\ncom{\bket}[6]{\left|#1,#2\middle|#3,#4\middle|#5,#6\right\rangle}
	\ncom{\proj}[6]{\left\lgroup#1,#2\middle|#3,#4\middle|#5,#6\right\rgroup}
	\ncom{\set}[2]{\left\{#1\:\middle|\:#2\right\}}				
	\ncom{\dee}{\mathrm{d}}								
	\ncom{\ldef}{\coloneqq}								
	\ncom{\rdef}{\eqqcolon}								
	\ncom{\Epsilon}{\mathcal{E}}							
	\ncom{\pee}{\mathds{P}}								
	\ncom{\peetje}{\mathrm{p}}							
	\ncom{\ee}{\mathds{E}}								
	\ncom{\var}{\mathds{V}\mathrm{ar}}					
	\ncom{\cov}{\mathds{C}\mathrm{ov}}					
	\DeclarePairedDelimiter{\ceil}{\lceil}{\rceil}
	\theoremstyle{plain}	
		\newtheorem{theorem}{Theorem}
		\newtheorem*{theorem*}{Theorem}
		\newtheorem{proposition}{Proposition}
		\newtheorem{claim}{Claim}\ncom{\claimautorefname}{Claim}	
	\theoremstyle{definition}
		\newtheorem{definition}{Definition}\ncom{\definitionautorefname}{Definition}
\begin{document}

\title{Completely real?\\ A critical note on the claims by Colbeck and Renner} 
\author{R. Hermens}
\date{November 04, 2020}

\maketitle

\begin{abstract}
In a series of papers \citet{ColbeckRenner11,ColbeckRenner15Comp,ColbeckRenner15Suff} claim to have shown that the quantum state provides a complete description for the prediction of future measurement outcomes.
In this paper I argue that thus far no solid satisfactory proof has been presented to support this claim.
Building on the earlier work of \citet{Leifer14}, \citet{Landsman15} and \citet{Leegwater16}, I present and prove two results that only partially support this claim.
I then discuss the arguments by Colbeck, Renner and Leegwater concerning how these results are to generalize to the full claim.
This argument turns out to hinge on the implicit use of an assumption concerning the way unitary evolution is to be represented in any possible completion of quantum mechanics. 
I argue that this assumption is unsatisfactory and that possible attempts to validate it based on measurement theory also do not succeed.
\end{abstract}

\section{Introduction}
Can quantum-mechanical description of physical reality be considered complete?
In the famous paper with this title, \citet{EPR35} argued that the question should be answered in the negative.
It was one of several arguments that Einstein devised and although it was presumably among his least favorites \citep{Fine17}, it is still the most widely known.

Around the same time, \citet{Neumann27,Neumann32} presented a formal argument towards the opposite conclusion.
Apart from starting from different assumptions\footnote{See \citep{Bub10,Dieks17} for comprehensive accounts of von Neumann's proof.}, the adopted notions of completeness are also quite distinct \citep{ElbyBrownFoster93}.
While Einstein was concerned with whether the quantum mechanical description sufficed to give a physical \emph{explanation} of the phenomena predicted, von Neumann adopted a more operational approach concerning the question whether the addition of hidden variables could allow for deviating \emph{predictions} for the phenomena.
More precisely, the following question was considered.
If we consider an ensemble of systems $E$ described by a pure quantum state $\psi$, is it possible to decompose this ensemble into sub-ensembles $E_1,E_2,\ldots$ such that the predictions for the sub-ensembles are not equal to the predictions for the total ensemble?
If not, then quantum mechanics may be considered complete.\footnote{Actually, von Neumann was specifically considering the possibility of dispersion-free sub-ensembles.}

Colbeck and Renner's completeness claim \citep{ColbeckRenner11,ColbeckRenner15Comp,ColbeckRenner15Suff} alludes to von Neumann's notion of completeness.
In their own words, they show that ``[u]nder the assumption that measurements can be chosen freely [...] no extension of quantum theory can give more information about the outcomes of future measurements than quantum theory itself'' \citep[p. 1]{ColbeckRenner11}.
In a follow-up paper \citet{ColbeckRenner12} extended their proof to obtain a $\psi$-ontology theorem.\footnote{The most famous $\psi$-ontology theorem is the PBR theorem \citep{PuseyEtAl12}. See \citep{Leifer14} for an overview of $\psi$-ontology theorems.}
Making use of the same ``free choice'' assumption, they conclude that ``a system's wave function is in one-to-one correspondence with its elements of reality.''

The assumption of ``free choice'' has since been identified as the conjunction of two more familiar assumptions: parameter independence and setting independence\footnote{This second assumption states that the settings for a measurement are independent of the state of the system to be measured. It also appears in the literature as measurement independence, $\lambda$-independence, no conspiracy, free will, and presumably under several other names.} \citep{GhirardiRomano13,VonaLang14}.
Despite this clarification, there has been confusion about whether these two assumptions suffice, or if more assumptions that rely on the specific mathematical structure of quantum mechanics are needed.
Since then, \citet{Leifer14}, relying on \citep{ColbeckRenner17}, presented a rigorous proof for the second claim by Colbeck and Renner that does not rely on the validity of the completeness claim.
\citet{Landsman15} gave a critical assessment of the completeness claim, arguing that on top of these explicit assumptions, the proof for $\psi$-completeness relies on no less than four further rather technical assumptions.
A far more friendly conclusion was reached by \citet{Leegwater16} who gave a thorough reworking of Colbeck and Renner's original proof.
However, Landsman's worries were not explicitly addressed by Leegwater, and the proof is not transparent enough to easily assess whether Landsman's criticism was indeed moot.

In this paper I argue that the general conclusion drawn by Colbeck and Renner is currently unwarranted.
To this end a formal statement of their two claims in terms of the ontic models framework is given in \autoref{FormalClaimsSec} after motivating the use of this framework in \autoref{frameworksec}.
In \autoref{FormalClaimsSec} it is also shown that the completeness claim is logically stronger than the $\psi$-ontology claim, thus showing that the issues of \citet{Landsman15} are not simply resolved by reworking the proof of \citet{Leifer14}.
The general strategy for proving the completeness claim is to start with proving it for the special case of two systems in a maximally entangled state.
This case is discussed in \autoref{partialsec} as well as its generalization to arbitrary entangled states.
In \autoref{qubitsec} I discuss how this result is supposed to generalize to the case of individual systems.
The crucial step needed to make that generalization is then scrutinized and criticized in \autoref{critsec} and I conclude that no satisfactory proof of the completeness claim is currently available.

\section{The use of ontic models}\label{frameworksec}
Ontic models are a useful tool for studying non-classical features of quantum mechanics.\footnote{The framework was first introduced in \citep{Spekkens05}. See also \citep{Harrigan07}.}
An ontic model for a system introduces a set $\Lambda$ of possible ontic states for the system.
These ontic states determine how the system is to respond in the case of a measurement, in the sense that with each possible measurement procedure the state $\lambda\in\Lambda$ associates a probability distribution over possible measurement outcomes.
These probability distributions may be thought of as representing, possibly dispositional, properties of the system.
Possible preparations of the system are associated with probability distributions over the state space (thus $\Lambda$ is assumed to be a measurable space).
These may be thought of as representing ignorance concerning the ``true'' state of the system $\lambda$.

\subsection{Motivating their use}

The use of ontic models is not trivially innocuous.
A possible reason for disliking ontic models, is that they seem to come equipped with metaphysical baggage.
It is common, as suggested by their name, to interpret ontic states as representing the physical properties of the system and with a slight abuse of language these states may be called ``real''.\footnote{See \citep{Halvorson19} for further discussion.}
Colbeck and Renner indeed seem hesitant to embrace this kind of terminology as indicated for example by their short discussion of the PBR theorem \citep[p. 68]{ColbeckRenner15Suff}.
But no such interpretation is forced upon us.
If one wants, one can simply think of ontic states as mathematical objects competing with quantum states for being ``the best'' description of the system.
The interpretation of ontic states for an ontic model is then as much open to debate as the interpretation of quantum states.

Another possible reason for not wanting to use ontic models is that they may not be general enough.
Indeed, setting independence is an implicit assumption of this framework since probability distributions over ontic states are taken to not depend on which measurements are or are not performed on the system.
This is in fact a common loophole in no-go theorems that may be exploited in, for example, retrocausal approaches \citep{EvansFriederich19}.
Although there are ways to generalize the framework to try to accommodate for this loophole\footnote{See for example \citep{Hermens19}.}, there is no need to go into this issue since setting independence is accepted as part of the assumptions in Colbeck and Renner's claims.

With respect to the issue of setting dependence, the framework used by Colbeck and Renner is indeed more general.
The framework they use is more akin to the use of causal networks such as in the work of \citet{WoodSpekkens15}.\footnote{It deserves to be noted though that the approach of Colbeck and Renner is rather unorthodox. The nodes in their networks are, what they call, ``spacetime random variables'' and their causal arrows have a meaning that is quite distinct from the one adopted in the theory of causal networks.}
In this approach, measurement settings, measurement outcomes, quantum states and hidden variable states are all treated on a par as random variables on some big probability space. 
So correlations between states and settings are indeed allowed.

However, the network approach is not general enough for the purposes of Colbeck and Renner's proof.
What is needed is that probabilities for outcomes conditional on settings and states are always well-defined. 
Within the network approach this meams that probabilities for settings and states are always well-defined and non-zero, so they can be conditionalized upon.
When considering a finite fragment of quantum mechanics (finite number of states and settings), such as in Bell's theorem, such probabilities can always be defined.
But Colbeck and Renner don't stick to the finite setting in their proof and it is doubtful whether the claims of Colbeck and Renner (possibly modified) could be proven in such a finite setting.\footnote{This can be seen from the fact that in the EPRB setting with two possible settings for both Alice and Bob any maximally entangled state can be written as a convex combinations of PR boxes. Thus within that finite setting the completeness claim does not hold.}

Within an ontic model, the probability of an outcome conditional on a setting is well-defined and is taken to be primitive.
A price to pay is that settings and outcomes are not treated on a par and probabilities for settings are not defined.
One could insist that, fundamentally, there should be no distinction between the two. 
But this is not a complaint against ontic models specifically, but should be considered on a more general level.
As \citet[p. 438]{SeevinckUffink11} note:
\begin{quote}
Even quantum mechanics leaves the question what measurement is going to be performed on a system as one that is decided \emph{outside} the theory, and does not specify how much more probable one measurement is than another.
It thus seems reasonable not to require from the candidate theories that they describe such probabilities.
\end{quote}
In this light we see that the complaint against ontic models is actually not that they are not general enough because they allow a distinction between settings and outcomes but, actually, they are too lenient for having a distinction that allows for ambiguity concerning the true ontology.
But that could hardly be a complaint for using the framework to prove theorems in the foundations of quantum mechanics.

For the reader not persuaded by any of these considerations, there is one final motivation for the use of ontic models.
I currently have no other viable framework within which I can formulate a rigorous partial proof for the claims by Colbeck and Renner.
It is simply a requirement that the conditional probabilities are well-defined, and ontic models simply seem to provide a minimal mathematical structure needed to establish this.
In this regard, it also deserves to be mentioned that ``ontic model'' is not much more than a label for a standardized mathematical structure that has been (sometimes implicitly) used to analyze results like Bell's theorem for decades.


\subsection{Formalism}

The prime constraint for ontic models, is that they can reproduce the quantum mechanical predictions for measurements on any quantum system.
Here, with a quantum system we associate a finite-dimensional Hilbert space $\h$ together with a triple of sets $(\mathcal{P},\mathcal{M},\mathcal{T})$ denoting sets of preparations, measurements and transformations respectively.
It is assumed that every preparation $P\in\mathcal{P}$ can be represented by a density operator $\rho$, every measurement $M\in\mathcal{M}$ can be represented by a self-adjoint operator $A$ (i.e., only PVMs are considered), and every transformation $T\in\mathcal{T}$ can be represented by a unitary operator $U$.
If $\rho$ is pure, it will often be represented by a unit vector $\psi$ that satisfies $\rho=[\psi]$, where $[\psi]$ is the 1-dimensional projection on the line spanned by $\psi$.
Taking into account the work of \citet{Spekkens05}, the triple $(\mathcal{P},\mathcal{M},\mathcal{T})$ is expected to be contextual, i.e, the mappings $M\mapsto A$, $P\mapsto\rho$, $T\mapsto U$ will in general be many-to-one.
For the sake of simplicity, it is assumed that the mappings are also onto.
Thus every quantum state can be prepared, every self-adjoint operator can be measured, and every unitary operation can be brought about by means of a transformation.\footnote{If one is interested in the possibility of experimental tests, a focus on theorems that can be formulated for finite triples is preferable (as for example in Bell's theorem), since the validity of quantum mechanics can only be investigated for at most a finite number of predictions. In this paper the issue is rather the validity of a particular claim. So it makes sense to work with one of the weakest formulations of the claim relying on the validity of quite a lot of quantum mechanics.}
The probabilities for measurement outcomes are given by the Born rule.
That is, when $P,T,M$ are represented by $\rho,U,A$ respectively, then
\begin{equation}
	\pee(a|M,T,P)=\mathrm{Tr}\left(U\rho U^*[a]_A\right),
\end{equation}
where $[a]_A$ is the projection onto the eigenspace of $A$ corresponding to the eigenvalue $a$.

An ontic model for a quantum system consists of a measurable space of ontic states $(\Lambda,\Sigma)$, where $\Lambda$ is the set of ontic states and $\Sigma$ is a $\sigma$-algebra of subsets of $\Lambda$.
Every measurement $M\in\mathcal{M}$, is associated with a Markov kernel $\peetje_M$, called a response function, that associates with every $\lambda\in\Lambda$ a probability distribution $\peetje_M(\:.\:|\lambda)$ over the possible measurement outcomes.
Following \citet{Leegwater16}, these probabilities will be called $\lambda$-probabilities.
Every preparation $P\in\mathcal{P}$ is associated with a probability measure $\mu_P$ over the ontic states and every transformation $T\in\mathcal{T}$ is associated with a Markov kernel $\gamma_T$ from $\Lambda$ to itself.
On average, the predictions of quantum mechanics are required to be reproduced:
\begin{equation}
\begin{split}
	\pee(a|M,P)&=\int\peetje_M(a|\lambda)\dee\mu_P(\lambda),\\
	\pee(a|M,T,P)&=\iint\peetje_M(a|\lambda)\gamma_T(\dee\lambda|\tilde{\lambda})\dee\mu_P(\tilde{\lambda}).
\end{split}
\end{equation}
Often, when there is no cause for confusion, the use of $P,T,M$ will be replaced by their quantum mechanical representatives, resulting in more transparent equations like
\begin{equation}
	\iint\peetje_A(a|\lambda)\gamma_U(\dee\lambda|\tilde{\lambda})\dee\mu_\psi(\tilde{\lambda})=\expv{U\psi}{[a]_A}{U\psi}.
\end{equation}
In other cases, the quantum representatives will be added as subscripts.
So $M_A$ denotes a measurement procedure represented by the self-adjoint operator $A$ in quantum mechanics.

It is worth noting that transformations can be reconsidered to be part of either the preparation procedure or the measurement procedure.
Specifically, for any $M_A,T_U$ there is a response function $\peetje_{M_A\circ T_U}$ defined by
\begin{equation}
	\peetje_{M_A\circ T_U}(a|\lambda)\ldef\int\peetje_{M_A}(a|\tilde{\lambda})\gamma_{T_U}(\dee\tilde{\lambda}|\lambda),
\end{equation}
which corresponds to an operational procedure for a measurement that is represented quantum mechanically by the operator $U^*AU$.
Likewise, for any $T_U,P_\rho$ there is a probability distribution $\mu_{T_U\circ P_\rho}$ defined by
\begin{equation}
	\mu_{T_U\circ P_\rho}(\Delta)\ldef\int\gamma_{T_U}(\Delta|\lambda)\dee\mu_{P_\rho}(\lambda),~\Delta\in\Sigma,
\end{equation}
which corresponds to an operational procedure for a preparation of the state represented by $U\rho U^*$.
Finally, any two transformations $\gamma_{T_1},\gamma_{T_2}$ may be stringed together to give the transformation ``$T_2$ after $T_1$'' given by
\begin{equation}
	\gamma_{T_2\circ T_1}(\Delta|\lambda)\ldef
	\int\gamma_{T_2}(\Delta|\tilde{\lambda})\gamma_{T_1}(\dee\tilde{\lambda}|\lambda).
\end{equation}

\section{\texorpdfstring{$\boldsymbol{\psi}$}{psi}-completeness and \texorpdfstring{$\boldsymbol{\psi}$}{psi}-ontology}\label{FormalClaimsSec}

A straightforward ontic model for quantum mechanics is the one by \citet{BeltramettiBugajski95}, which is basically quantum mechanics itself.
The set of ontic states $\Lambda$ is taken to be the set of one-dimensional projections.
The $\lambda$-probabilities for a measurement $M_A$ are given by the Born rule, i.e., for each $[\psi]\in\Lambda$
\begin{equation}
	\peetje_{M_A}(a|[\psi])=\expv{\psi}{[a]_A}{\psi}.
\end{equation}
A preparation $P_\psi$ of a pure state corresponds to the Dirac-distribution centered on $[\psi]$, while a preparation of a mixed state corresponds to an appropriate convex combination of such Dirac-distributions.
So in general there are multiple distinct distributions $\mu_\rho$ corresponding to the same $\rho$.
The quantum dynamics are just copied, i.e, 
\begin{equation}
	\gamma_U(\Delta|[\psi])=
	\begin{cases}
		1 & [U\psi]\in\Delta,\\
		0 & \text{otherwise.}
	\end{cases}
\end{equation}

This model may rightfully be said to be trivial.
In particular because it has the property that the $\lambda$-probabilities coincide with the quantum probabilities.
This is the key idea to the formal notion of triviality and $\psi$-completeness.
\begin{definition}\label{PsiCompDef}
An ontic model for a quantum system is said to be \emph{trivial} with respect to a set of measurements $\mathcal{M}'\subset\mathcal{M}$ and preparations $\mathcal{P}'\subset\mathcal{P}$ if for every preparation $P_\rho\in\mathcal{P}'$ and every measurement $M_A\in\mathcal{M}'$ the $\lambda$-probabilities $\mu_{P_\rho}$-almost surely coincide with the quantum mechanical probabilities:
\begin{equation}
	\int\left|\peetje_{M_A}(a|\lambda)-\mathrm{Tr}\left(\rho [a]_A\right)\right|\dee\mu_{P_\rho}(\lambda)=0
\end{equation}
or, equivalently,
\begin{equation}
	\mu_{P_{\rho}}\left(\set{\lambda\in\Lambda}{\peetje_{M_A}(a|\lambda)=\mathrm{Tr}\left(\rho [a]_A\right)}\right)=1.
\end{equation}
If the ontic model is trivial with respect to all measurements and preparations, then it is called \emph{$\psi$-complete}.
\end{definition}

Colbeck and Renner's completeness claim may now be formulated as follows:
\begin{claim}\label{claim1}
For any quantum system every ontic model that satisfies parameter independence is $\psi$-complete.
\end{claim}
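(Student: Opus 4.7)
The plan is to reduce the claim to the special case of bipartite systems in a maximally entangled state, for which parameter independence combined with the strong nonlocal correlations should force triviality, and then to lift the result to arbitrary single-system preparations via a purification argument.

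First I would handle the maximally entangled case on $\h\otimes\h$. Given such a state $\Phi$ and any ontic model for the composite system, parameter independence ensures that Alice's $\lambda$-probabilities $\peetje_{M_A}(a|\lambda)$ do not depend on Bob's choice of measurement. Exploiting the perfect correlations that maximally entangled states exhibit across a continuous family of aligned measurement bases (via a chained Bell-type argument of the Barrett--Kent--Colbeck--Renner variety) one should be able to pin down Alice's local response functions $\mu_\Phi$-almost surely to the quantum marginal $\expv{\Phi}{[a]_A\otimes\een}{\Phi}$, and symmetrically for Bob.

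Second, I would extend this to arbitrary bipartite pure states. By Schmidt decomposition any such state is either already maximally entangled on the span of its Schmidt vectors, or can be written as the marginal of a maximally entangled state on a larger Hilbert space obtained by adjoining an ancilla. Since measurements on the original bipartite system embed trivially into measurements on the enlarged one, and since an ontic model for the subsystem should extend coherently to the enlargement, triviality propagates to all pure bipartite preparations, and then to mixed preparations by convex decomposition.

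Third, I would transfer the result to single-system preparations. Given $P_\psi$ with $\psi\in\h$, I would view the system as half of an entangled pair on $\h\otimes\h'$, so that an appropriate measurement on the $\h'$-factor collapses Alice's side to $\psi$ (possibly after a local unitary). The triviality result on the bipartite system should then translate into the claim that $\peetje_{M_A}(a|\lambda)=\expv{\psi}{[a]_A}{\psi}$ holds $\mu_{P_\psi}$-almost surely, provided that conditioning on Bob's outcome in the ontic model, together with the relevant $\gamma_{T_U}$ on Alice's side, genuinely reproduces the single-system distribution $\mu_{P_\psi}$ and the single-system response $\peetje_{M_A}$.

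The main obstacle lies precisely in this last passage. Moving from a conditional ontic distribution on the composite system to the ontic distribution associated with a stand-alone preparation $P_\psi$ requires an assumption about how unitary transformations are represented on $\Lambda$ and about the compatibility of the ontic structure with the tensor product. Without such a bridging assumption -- which, as the introduction indicates, is precisely the hidden ingredient scrutinized in \autoref{critsec} -- parameter independence alone does not appear sufficient to force triviality for arbitrary single-system preparations, and the claim as stated should not be expected to admit a fully rigorous proof along these lines.
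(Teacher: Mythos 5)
There is a genuine gap, but it is worth saying first that your closing paragraph lands on the paper's own verdict: \autoref{claim1} is not proven in the paper at all. The paper argues it is \emph{false as stated} --- the non-trivial ontic models of Bell (1966) and Gudder (1970) for an individual $d$-level system satisfy parameter independence vacuously (there is no second subsystem) and are therefore direct counterexamples. What the paper actually establishes is only the bipartite maximally entangled case (\autoref{equipthm}), its extension to arbitrary bipartite states restricted to Schmidt-diagonal measurements (\autoref{extequipthm}), and a conditional completeness theorem that needs two \emph{further} assumptions (ancilla independence and unitary invariance) on top of parameter independence. So your instinct that the single-system step requires an unargued bridging assumption about how unitaries act on $\Lambda$ is exactly right, and is the paper's central thesis.

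That said, your intermediate steps contain concrete errors. In step 2, an arbitrary bipartite pure state is \emph{not} maximally entangled on the span of its Schmidt vectors (unequal Schmidt coefficients), and it cannot be obtained as the marginal of a maximally entangled state on a larger space --- the reduced state of a maximally entangled state is maximally mixed, not an arbitrary entangled pure state. The paper's actual route is embezzlement: couple the pair to auxiliary pairs, approximate the $\left|c_i\right|^2$ by rationals $n_i^\epsilon/d^\epsilon$, drive the enlarged system arbitrarily close to a maximally entangled state by local unitaries, and chain Bell inequalities, with all three limits taken jointly at the end; and even then the conclusion only covers local measurements diagonal in the Schmidt basis and requires ancilla independence, which does not follow from parameter independence. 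In step 3, your steering construction (conditioning on Bob's outcome to ``collapse'' Alice's side to $\psi$) is a different route from the paper's (append an ancilla in state $\phi$ and apply $\gamma_U$ with $U(\psi_1\otimes\phi)=\psi_d^S$), and it is if anything worse off: the post-measurement conditional distribution over $\Lambda$ need not correspond to any stand-alone preparation $\mu_{P_\psi}$, and the conditional response functions need not be the single-system ones. The paper shows that even the cleaner unitary-coupling route requires the assumption it calls unitary invariance, which already entails dispersion-freeness transfers across operationally equivalent preparations and so essentially begs the question; \autoref{critsec} argues it cannot be rescued by appeal to measurement theory either.
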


It is worth pointing out that the notion of $\psi$-completeness was already introduced by \citet{HarriganSpekkens10} with a different, but formally related, meaning.
They call an ontic model $\psi$-complete if the set of ontic states is isomorphic to the set of ontic states of the Beltrametti-Bugajski model and preparations of pure quantum states correspond to the appropriate Dirac distributions.
It follows that if an ontic model is $\psi$-complete in the sense of Harrigan and Spekkens, it is automatically $\psi$-complete in the sense of \autoref{PsiCompDef}.

In the same paper, Harrigan and Spekkens introduced the distinction between $\psi$-ontic and $\psi$-epistemic ontic models.
Roughly, the idea is that whenever two preparations corresponding to two distinct pure quantum states, then in a $\psi$-ontic model these preparations will yield distinct ontic states for the system.
The following definition is the now standard way to make this idea precise.
\begin{definition}\label{PsiOnticDef}
An ontic model for a quantum system is said to be \emph{$\psi$-ontic} if for any pair of preparations $P_1,P_2$ represented by distinct pure quantum states the variational distance between their corresponding probability measures $\mu_{P_1},\mu_{P_2}$ is one, i.e.,
\begin{equation}\label{PsiOnticEq}
	\sup_{\Delta\in\Sigma}\left|\mu_{P_1}(\Delta)-\mu_{P_2}(\Delta)\right|=1.
\end{equation}
\end{definition}

Colbeck and Renner's $\psi$-ontology claim may now be formulated as follows:
\begin{claim}\label{claim2}
For any quantum system every ontic model that satisfies parameter independence is $\psi$-ontic.
\end{claim}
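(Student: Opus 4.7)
The plan is to reduce Claim~\ref{claim2} to a measurement-distinguishability statement. To show that $\mu_{P_1}$ and $\mu_{P_2}$ have variational distance one for a pair of distinct pure preparations $P_{\psi_1},P_{\psi_2}$, it suffices to exhibit a measurable set $\Delta\in\Sigma$ with $\mu_{P_1}(\Delta)=1$ and $\mu_{P_2}(\Delta)=0$. The natural candidate is $\Delta\ldef\set{\lambda\in\Lambda}{\peetje_{M_A}(a|\lambda)=1}$ for a carefully chosen measurement $M_A$ and outcome $a$, so I would look for an observable whose response function equals $1$ almost surely under one of the two ontic distributions and is bounded strictly below $1$ almost surely under the other.

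If the full $\psi$-completeness of Claim~\ref{claim1} were already granted, the reduction would be immediate: taking $A=[\psi_1]$ and $a=1$, triviality forces $\peetje_{M_A}(1|\lambda)=\expv{\psi_1}{[\psi_1]}{\psi_1}=1$ $\mu_{P_1}$-almost surely and $\peetje_{M_A}(1|\lambda)=|\braket{\psi_1}{\psi_2}|^2<1$ $\mu_{P_2}$-almost surely, supplying the separating set $\Delta$ directly. Since the only secure version of $\psi$-completeness available is the partial result for maximally entangled bipartite states promised in \autoref{partialsec}, the actual plan is to route the argument through that partial result rather than wait on the disputed full Claim~\ref{claim1}.

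Concretely, for given distinct pure $\psi_1,\psi_2\in\h$ I would introduce an auxiliary copy $\h'\cong\h$ and a maximally entangled state $\Phi\in\h'\otimes\h$. A measurement in any orthonormal basis $\{\phi_j\}$ of $\h'$ remotely prepares a determined pure state on the second factor conditional on its outcome. Choosing two such bases that remotely prepare $\psi_1$ and $\psi_2$ respectively, and applying the partial triviality result to the joint measurement that pairs the chosen basis measurement on $\h'$ with a measurement of $[\psi_1]$ on $\h$, one obtains conditional distributions on the ontic state of $\h$ under which the response function of $[\psi_1]$ takes the Born value $1$ (respectively $|\braket{\psi_1}{\psi_2}|^2$) almost surely. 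The variational distance between these conditional distributions then lifts to variational distance one between $\mu_{P_{\psi_1}}$ and $\mu_{P_{\psi_2}}$.

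The main obstacle is precisely this final lift: it requires operational equivalence between direct and remote preparations together with a use of parameter independence (no-signaling) to guarantee that the $\h$-marginal does not depend on the basis chosen on $\h'$. That step essentially recapitulates the Leifer--Colbeck--Renner remote-preparation argument, and it is where I expect the real technical work to sit; once it is in place the measurement-distinguishability reduction above closes the proof.
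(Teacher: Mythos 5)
There is a genuine mismatch here, on two levels. First, the paper does not actually prove Claim~\ref{claim2}: it is stated as one of Colbeck and Renner's claims under scrutiny, and the text explicitly notes that the best rigorous version available (Leifer's reworking of the Colbeck--Renner argument) establishes only that the overlap between the preparation measures of two distinct pure states tends to zero as the Hilbert-space dimension goes to infinity --- ``Thus Claim 2 only holds in the limit.'' So even a fully executed version of the remote-steering route you sketch would deliver variational distance approaching one asymptotically in $d$, not the exact equality \eqref{PsiOnticEq} that the claim as stated demands for every finite-dimensional system. Your first paragraph (deriving $\psi$-onticity from $\psi$-completeness via a separating observable) does reproduce the paper's Proposition~1, but the paper is explicit that this gives no independent route to Claim~\ref{claim2}, since Claim~\ref{claim1} is precisely what is in dispute.

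Second, the step you defer as ``the real technical work'' is not a technicality but the substantive content, and it is not available from parameter independence alone. The equiprobability theorem gives you triviality of the \emph{joint} $\lambda$-probabilities with respect to $\mu_\Phi$, a measure on the ontic state space $\Lambda_{\h'\otimes\h}$ of the pair. Conditioning on the ancilla outcome yields a conditional distribution on that \emph{joint} space, whereas $\mu_{P_{\psi_1}}$ and $\mu_{P_{\psi_2}}$ live on the ontic state space of the single system $\h$ and arise from direct preparations. Identifying the steered conditional distribution with the direct-preparation measure is exactly the kind of cross-model identification the paper spends Sections~\ref{qubitsec} and~\ref{critsec} criticizing (cf.\ the discussion around \eqref{MuddyEq} and \eqref{NoMoreMud}): it requires something like ancilla independence plus a preparation-noncontextuality or unitary-invariance assumption relating remote and direct preparations at the ontic level, none of which follows from parameter independence. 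Without an explicit such assumption your lift does not close, and with it you are no longer proving the claim as stated.
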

This claim and \autoref{claim1} above are intended to be paraphrases of Claim 1 and Claim 2 in \citep{ColbeckRenner15Comp} adapted to the framework of ontic models.
But it deserves to be noted that instead of looking at models for specific systems, Colbeck and Renner are looking at an ``alternative theory that is compatible with quantum theory''.
In the next section it will become clear that this is not a moot point since their argument relies on relating several quantum systems, which will then yield a slight modification of the two claims.

In the papers \citep{ColbeckRenner12,ColbeckRenner15Comp}, Colbeck and Renner proved Claim 2 as part of a corollary of a proof for Claim 1.
An independent proof for Claim 2 was given in \citep{ColbeckRenner17} which was also referred to in \citep{ColbeckRenner15Suff} and served as the main inspiration for Leifer's formulation and proof of the Colbeck-Renner $\psi$-ontology theorem in \citep{Leifer14}.
It may be noted that actually Leifer showed not that models should be $\psi$-ontic, but that the overlap between two quantum states tends to zero as the dimension of the system goes to infinity.
Thus Claim 2 only holds in the limit and we will see that something similar is needed when considering Claim 1.

The following proposition establishes the relation between the two claims more directly.
\begin{proposition}
Any ontic model for a quantum system that is $\psi$-complete is also $\psi$-ontic.
\end{proposition}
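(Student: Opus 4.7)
The plan is to use the fact that two distinct pure quantum states are always statistically distinguished by at least one measurement, and then let $\psi$-completeness transfer that distinguishability down to the level of the measures $\mu_P$ themselves. The argument should be short and purely definitional: once the almost-sure identifications forced by \autoref{PsiCompDef} are pinned on the correct pair of disjoint events, variational distance $1$ drops out immediately.

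Concretely, I would start with two preparations $P_1, P_2$ represented by distinct pure states $\psi_1,\psi_2$ and take the measurement with self-adjoint operator $A \ldef [\psi_1]$, focusing on outcome $a=1$. The Born rule gives $\mathrm{Tr}([\psi_1][\psi_1]) = 1$ and $\mathrm{Tr}([\psi_2][\psi_1]) = |\braket{\psi_1}{\psi_2}|^2$, and since $[\psi_1]\neq[\psi_2]$ this second number is strictly less than $1$. Thus the Born probabilities for this fixed measurement-outcome pair are distinct on the two preparations.

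Next, I would invoke \autoref{PsiCompDef} separately for $(P_1,M_A)$ and $(P_2,M_A)$. This gives measurable sets $\Delta_i \in \Sigma$ with $\mu_{P_i}(\Delta_i) = 1$ such that $\peetje_{M_A}(1|\lambda) = 1$ for every $\lambda \in \Delta_1$ and $\peetje_{M_A}(1|\lambda) = |\braket{\psi_1}{\psi_2}|^2$ for every $\lambda \in \Delta_2$. Measurability is not an issue: $\peetje_{M_A}(1|\cdot)$ is a measurable function on $(\Lambda,\Sigma)$ as a slice of a Markov kernel, so preimages of singletons lie in $\Sigma$. Because the two target values differ, $\Delta_1 \cap \Delta_2 = \emptyset$, hence $\mu_{P_2}(\Delta_1) \le \mu_{P_2}(\Lambda \setminus \Delta_2) = 0$.

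Choosing $\Delta = \Delta_1$ in the supremum defining the variational distance then yields $|\mu_{P_1}(\Delta_1) - \mu_{P_2}(\Delta_1)| = |1 - 0| = 1$, and since variational distance is bounded above by $1$, the supremum in \autoref{PsiOnticDef} is attained. I do not anticipate any real obstacle: the content of the proposition is essentially the observation that almost-sure agreement of a response function with two different numerical values forces the corresponding probability measures onto disjoint carriers.
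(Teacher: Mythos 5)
Your proof is correct and follows essentially the same route as the paper's: find a measurement whose Born probabilities differ on the two states, apply $\psi$-completeness to both preparations to get disjoint full-measure sets, and read off variational distance $1$. The only difference is that you exhibit the distinguishing measurement explicitly as $A=[\psi_1]$, whereas the paper merely asserts its existence.
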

\begin{proof}
Suppose a quantum system is given together with a $\psi$-complete ontic model. 
Let $P_1,P_2$ be any two preparations corresponding to distinct pure quantum states $\psi_1,\psi_2$.
Then there exists a measurement $M\in\mathcal{M}$ corresponding to a self-adjoint operator $A$ with eigenvalue $a$ such that
\begin{equation}
	\expv{\psi_1}{[a]_A}{\psi_1}\neq\expv{\psi_2}{[a]_A}{\psi_2}.
\end{equation}
Because the model is $\psi$-complete it follows that
\begin{equation}
	\mu_{P_1}\left(\set{\lambda\in\Lambda_\h}{\peetje_M(a|\lambda)=\expv{\psi_1}{[a]_A}{\psi_1}}\right)=1,
\end{equation}
while
\begin{equation}
	\mu_{P_2}\left(\set{\lambda\in\Lambda_\h}{\peetje_M(a|\lambda)=\expv{\psi_1}{[a]_A}{\psi_1}}\right)=0
\end{equation}
so \eqref{PsiOnticEq} holds.
\end{proof}

The converse statement of this proposition is not true.
One can think of Bohmian mechanics as an example of a $\psi$-ontic theory that is not $\psi$-complete.
Thus there is little hope for finding an easy proof for Claim 1 based on Claim 2.
My focus is therefore on Claim 1 alone, which \citet{Landsman15} has criticized but which was endorsed by \citet{Leegwater16} after he clarified some of the technicalities behind the proof.
I shall argue that a rigorous proof for Claim 1 is currently lacking.
But before that, I shall discuss and prove some weaker versions of Claim 1 as a partial endorsement of Leegwater's conclusion.

\section{Partial proofs for the completeness claim}\label{partialsec}
\subsection{The equiprobability theorem}
The simplest partial proof for \autoref{claim1} concerns local measurements on a pair of $d$-level quantum systems in a maximally entangled state.
The question is then of course how to generalize this to arbitrary states, arbitrary measurements and arbitrary systems.
But first an unambiguous formulation and proof of the partial claim is required.

The scenario is the familiar EPR setup with associated Hilbert space $\h=\mathbb{C}^d\otimes\mathbb{C}^d$, and a preparation of the system in the state
\begin{equation}
	\psi_d\ldef\sum_{i=1}^d\frac{1}{\sqrt{d}}e_i\otimes e_i,
\end{equation}
where $(e_i)_{i=1}^d$ is an arbitrary orthonormal basis for $\mathbb{C}^d$.
One of the subsystems is then sent to Alice, and the other to Bob, who are assumed to be space-like separated.

Now let $\mathcal{M}_A\subset\mathcal{M}$ denote the set of possible measurements where only Alice performs a measurement (locally).
These can be represented by self-adjoint operators of the form $A\otimes\een$.
Symmetrically, let $\mathcal{M}_B$ be the set of possible measurements where only Bob performs a measurement, which can thus be represented by operators of the form $\een\otimes B$.
Finally, take $\mathcal{M}_\mathrm{LOC}$ to be the set of measurements where either Alice or Bob performs a measurement or both.
So one may take $\mathcal{M}_\mathrm{LOC}\simeq\mathcal{M}_A\times\mathcal{M}_B$.
Parameter independence can now be formulated as
\begin{equation}
\begin{split}
	\peetje_{M_{A\otimes\een}}(a|\lambda)&=\sum_b\peetje_{M_{A\otimes B}}(a,b|\lambda),\\
	\peetje_{M_{\een\otimes B}}(b|\lambda)&=\sum_a\peetje_{M_{A\otimes B}}(a,b|\lambda),
\end{split}
\end{equation}
for all $M_{A\otimes\een}\in\mathcal{M}_A,M_{\een\otimes B}\in\mathcal{M}_B,M_{A\otimes B}\in\mathcal{M}_{\mathrm{LOC}}$.
Here the local operational procedures $M_{A\otimes\een}$ and $M_{\een\otimes B}$ should be the same as those represented by $M_{A\otimes B}$.
One then has the following theorem.

\begin{theorem}[Equiprobability theorem]\label{equipthm}
Any ontic model for the pair of $d$-level systems that satisfies parameter independence must be trivial w.r.t. $\mathcal{M}_{\mathrm{LOC}}$ and all preparations that are represented by $\psi_d$.
\end{theorem}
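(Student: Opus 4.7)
The plan is to exploit the fact that $\psi_d = d^{-1/2}\sum_i f_i \otimes \bar{f}_i$ for \emph{every} orthonormal basis $\{f_i\}$ of $\mathbb{C}^d$, so that any Alice measurement in some basis $\{f_i\}$ is perfectly correlated with Bob's measurement in the conjugate basis $\{\bar{f}_j\}$ (joint quantum probability $\delta_{ij}/d$). Fed into the reproduction identity, this perfect correlation forces the ontic joint response into a diagonal form; the substantive content of the theorem then lies in upgrading a corresponding averaged identity to pointwise equality almost surely.

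Concretely, I would fix a preparation $P$ represented by $\psi_d$ with ontic measure $\mu := \mu_P$, and take $A$ non-degenerate with eigenbasis $\{f_i\}$ and distinct eigenvalues $(a_i)$, paired with a Bob observable $B$ with eigenbasis $\{\bar{f}_j\}$ and distinct eigenvalues $(b_j)$. The reproduction identity
\begin{equation*}
\int \peetje_{M_{A\otimes B}}(a_i,b_j|\lambda)\,\dee\mu(\lambda) = \delta_{ij}/d,
\end{equation*}
together with non-negativity of $\peetje$, forces $\peetje_{M_{A\otimes B}}(a_i,b_j|\lambda) = 0$ $\mu$-a.s.\ for $i\neq j$. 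Parameter independence then identifies the diagonal entries with Alice's marginal: setting $p_i(\lambda) := \peetje_{M_{A\otimes \een}}(a_i|\lambda)$ one has $\peetje_{M_{A\otimes B}}(a_i,b_j|\lambda) = p_i(\lambda)\delta_{ij}$ $\mu$-a.s. Applying the same argument after permuting Bob's outcome labels by $\sigma \in S_d$ (giving $\pee(a_i,b_j|M_{A\otimes B_\sigma},\psi_d) = \delta_{i,\sigma(j)}/d$) yields $\peetje_{M_{A\otimes B_\sigma}}(a_i,b_j|\lambda) = p_i(\lambda)\delta_{i,\sigma(j)}$ $\mu$-a.s., and hence Bob's marginal $\peetje_{M_{\een\otimes B_\sigma}}(b_j|\lambda) = p_{\sigma(j)}(\lambda)$.

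The main obstacle is to promote the averaged identity $\int p_i\,\dee\mu = 1/d$ to the pointwise identity $p_i(\lambda) = 1/d$ $\mu$-a.s. The diagonal structure together with label-permutation consistency are too weak on their own: any probability vector $(p_i(\lambda))$ with the correct uniform mean satisfies them. I would close this gap with a chained-measurement argument along the lines of Colbeck and Renner's original construction: choose a family of Alice bases $\{f^{(k)}_i\}_{k=0}^N$ interpolating in small rotational steps and closing back into a loop, each paired with the corresponding matched Bob observable; the near-perfect quantum correlations between \emph{successive} pairs bound the total-variation distance between $(p^{(k)}_i(\lambda))$ and $(p^{(k+1)}_i(\lambda))$ uniformly in $\lambda$, and a chained Bell-type inequality closing the loop pinches $p_i(\lambda)$ to the mean $1/d$ $\mu$-a.s.\ as $N\to\infty$.

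With uniform Alice marginals in hand, triviality for Alice-only measurements with degenerate observables follows by coarse-graining the diagonal entries over eigenspaces, and the swap invariance of $\psi_d$ provides the symmetric statement for Bob. For a generic joint measurement $M_{A\otimes B}\in\mathcal{M}_\mathrm{LOC}$ with non-matched bases $\{f_i\}, \{g_j\}$, I would apply the matched-diagonal identity to the auxiliary pair in which Alice uses basis $\{\bar{g}_j\}$ and Bob is left unchanged, thereby pinning down one exact joint response; transferring this information to the original $(A,B)$ pair via the Markov-kernel composition rules between transformations and measurements in the formalism, together with the already-established uniform marginals, then reconstructs the quantum value $|\langle \bar{f}_i | g_j\rangle|^2/d$ pointwise $\mu$-a.s.
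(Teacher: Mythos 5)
Your core strategy is the same as the paper's: reduce everything to showing that the single\hyp{}wing marginal $\lambda$-probabilities of a complete local observable are all equal, and establish this by chaining near\hyp{}perfect correlations over a family of interpolating bases (the chained Bell inequality), with degenerate observables handled afterwards by coarse\hyp{}graining. Two points need correction, however. First, the chained argument does \emph{not} bound the distance between successive $(p^{(k)}_i(\lambda))$ and $(p^{(k+1)}_i(\lambda))$ ``uniformly in $\lambda$''. Parameter independence and positivity give only the pointwise bound $\left|\peetje_{A_\theta}(a_i|\lambda)-\peetje_{B_\phi}(b_i|\lambda)\right|\leq\sum_{k\neq i}\peetje_{A_\theta\otimes B_\phi}(a_i,b_k|\lambda)+\sum_{l\neq i}\peetje_{A_\theta\otimes B_\phi}(a_l,b_i|\lambda)$, and the right\hyp{}hand side is controlled only after integrating against $\mu_{\psi_d}$, where the reproduction condition evaluates it to $\tfrac{2}{d}\sin^2(\phi-\theta)$. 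The entire chain therefore lives in $L^1(\mu_{\psi_d})$, which is exactly why the conclusion is only $\mu_{\psi_d}$-almost-sure; as stated, your uniform bound would prove more than is true and is not obtainable from the assumptions.

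Second, the final paragraph does not go through. For a joint measurement $M_{A\otimes B}$ in non\hyp{}matched bases, uniform marginals plus the matched\hyp{}basis diagonal identity underdetermine the joint response: already for $d=2$ the family $\peetje(a_i,b_j|\lambda)=\tfrac{1}{4}+(-1)^{i+j}\varepsilon(\lambda)$ with $\int\varepsilon\,\dee\mu_{\psi_d}=0$ has uniform marginals and the correct $\mu_{\psi_d}$-average, so nothing you list excludes it; and the Markov\hyp{}kernel composition rules in the formalism relate transformations to preparations and response functions, not one joint response function to another, so they cannot ``transfer'' the exact diagonal identity from the auxiliary pair $(\bar{g},B)$ to $(A,B)$. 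To be fair, the paper's own proof confines itself to observables of the form $A\otimes\een$ and $\een\otimes B$ together with their coarse\hyp{}grainings, and so does not establish the genuinely joint case either; but your proposed mechanism for it, as described, would fail, and an actual argument (e.g.\ running the chain for suitable observables built from the projections $[f_i]\otimes[g_j]$) would have to be supplied.
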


The basic structure of the proof for this theorem is due to \citet{ColbeckRenner11}.\footnote{See also \citep{BarrettKentPironio06} and references therein for the use of chained Bell inequalities.}
A proof that is more adapted to the formulation used here can be found in the proofs of Theorems 10.4 and 10.7 in \citep{Leifer14}. 
For completeness a proof is given in \autoref{EqProbSec}, which will also serve as a stepping stone for the proof of a generalized version of the theorem below.
 
Although the equiprobability theorem is still far removed from \autoref{claim1}, it deserves to be noted that the result is an improvement on earlier work.
\Citet{Stairs83} showed that for any pair in a maximally entangled $d$-level systems with $d\geq3$, there is no value definite ontic model that satisfies parameter independence.\footnote{This result later became known as the free will theorem by \citet{ConwayKochen06,ConwayKochen09}.}
In the present formulation, value definiteness can be understood as the assumption that the $\lambda$-probabilities are 0,1-valued. 
The work by Colbeck and Renner improves on this result on two accounts: the result also holds for the case $d=2$ and, not only must ontic models be probabilistic, they must even follow the quantum probability rule.

Like the proof of Stairs' theorem, the proof of the equiprobability theorem relies heavily on the special properties of maximally entangled states.
Because of the symmetry in the maximally entangled state, it assigns equal probabilities to all outcomes for a non-disperse observable.
This implies that it suffices to show that all $\lambda$-probabilities should be equal.
Thus no direct relation to the Born rule needs to be established.
The proof that all the $\lambda$-probabilities are equal in turn relies heavily on the perfect correlations of the maximally entangled state.

The extension of the equiprobability theorem still relies on perfect correlations, but relaxes the symmetry property.
Specifically, one makes the generalization to arbitrary entangled states, but does so by restricting to perfectly correlated measurements, i.e., measurements whose corresponding operator is diagonal in a Schmidt basis.
To obtain this result, one further needs to extend the framework of ontic models.
This is done in the next subsection.

\subsection{The extended equiprobability theorem}

The scenario of the extended equiprobability theorem is again a pair of $d$-level quantum systems, but now we consider an arbitrary state $\psi_d^S$.
Instead of arbitrary local measurements, one now only considers measurements whose corresponding operator is diagonal with respect to some orthonormal basis  $(e_i)_{i=1}^d$ for $\mathbb{C}^d$ such that
\begin{equation}\label{psidsdef}
	\psi_d^S=\sum_{i=1}^dc_ie_i\otimes e_i,
\end{equation}
for some numbers $c_i$.\footnote{Note that in the case where $\psi^S_d$ is a maximally entangled state, we again obtain the set of all local measurements. On the other hand, if $\psi_d^S=e\otimes e$, then the only local measurements considered are those whose corresponding operator has $e$ as an eigenstate.}
The strategy to prove that the $\lambda$-probabilities for these measurements should equal the quantum probabilities relies on coupling the pair of systems to a pair of $D$-level systems.
The combined system is then brought to a state that is approximately maximally entangled and then the equiprobability theorem is applied so probabilities for local measurements should then approximately equal $\tfrac{1}{D}$.
These probabilities are then related to the $\lambda$-probabilities for the initial pair of systems, which can then be approximated to have the value $\tfrac{n_i}{D}\approx\left|c_i\right|^2$.

To deal with the notion of coupling one system to another, one needs a way to translate the ontic models for two (or more) quantum systems to the ontic model for the joint quantum system.
Here enters the idea for what will be called a \emph{complete ontic model}.\footnote{This definition is based on Leifer's proposal for how to deal with appending ancillas in ontic models \citep[\S8.2]{Leifer14}, which is also relevant for his proof of \autoref{claim2}.}
\begin{definition}
A \emph{complete ontic model} for quantum mechanics is a collection of ontic models, with one ontic model for every finite-dimensional quantum system.
For every pair of quantum systems with Hilbert spaces $\h_1$ and $\h_2$ and for any system with Hilbert space $\h_1\otimes\h_2$ there is for each preparation $P_2$ of the second system a Markov kernel $\gamma_{P_2}$ from the ontic model $(\Lambda_1,\Sigma_1)$ for the first system to the ontic model $(\Lambda_{12},\Sigma_{12})$ for the joint system such that for every preparation $P_1$ of the first system the probability measure $\mu_{P_1P_2}$ defined as
\begin{equation}
	\mu_{P_1P_2}(\Delta)\ldef\int\gamma_{P_2}(\Delta|\lambda)\dee\mu_{P_1}(\lambda)
\end{equation}
corresponds to some preparation $P_1P_2$ of the joint system such that if $P_1$ is represented by the quantum state $\rho_1$ and $P_2$ by $\rho_2$, then $P_1P_2$ is represented by $\rho_1\otimes\rho_2$.

A complete ontic model is said to satisfy \emph{parameter independence} if all its ontic models satisfy parameter independence.
\end{definition}

One can think of the Markov kernel $\gamma_{P_2}$ as modeling the act of appending an ancilla prepared according to $P_2$ to the initial system.
It provides a translation from the ontic model for the initial system to the ontic model for the joint system.
It seems natural to demand that the description of the initial system in the ontic model for the joint system, now considered as a subsystem, should be at least as rich as its description in the initial ontic model.
The assumption of ancilla independence captures this intuition.

\begin{definition}
A complete ontic model is said to satisfy \emph{ancilla independence} when $\lambda$-probabilities for an individual system arise as averages of $\lambda$-probabilities for local measurements on a joint system.
Specifically, for every response function $\peetje_{M_A}$ in the initial ontic model there is a response function $\peetje_{M_{A\otimes\een}}$ in the ontic model for the joint system such that  for every preparation $P_2$ for a second system
\begin{equation}\label{AnApp}
	\peetje_{M_A}(a|\lambda)=\int \peetje_{M_{A\otimes\een}}(a|\tilde{\lambda})\gamma_{P_2}(\dee\tilde{\lambda}|\lambda)
\end{equation}
for all possible measurement outcomes $a$ and all ontic states $\lambda\in\Lambda_1$.
\end{definition}

At first sight the notion of ancilla independence may be reminiscent of preparation independence as adopted in the PBR theorem, since the validity of \eqref{AnApp} is independent of the choice of $P_2$.
Note though that unlike preparation independence, which relies on the Cartesian product assumption \citep[p. 100]{Leifer14}, ancilla independence does not have direct implications for the structure of the set of ontic states for the joint system and how these ontic states relate to the ontic states for the individual systems.
It seems to me that ancilla independence is a reasonable assumption.
Either way, it is needed to prove the extended equiprobability theorem.

\begin{theorem}[Extended equiprobability theorem]\label{extequipthm}
For any complete ontic model that satisfies ancilla independence and parameter independence let an ontic model for a pair of $d$-level systems with $d\geq2$ be given and let $\psi_d^S$ be any quantum state for the pair.
Then the ontic model must be trivial with respect to the set of preparations that are represented by $\psi_d^S$ and the set of local measurements whose corresponding operator is diagonal in an orthonormal basis $(e_i)_{i=1}^d$ for which \eqref{psidsdef} holds for some numbers $c_i$.
\end{theorem}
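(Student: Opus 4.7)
The plan is to bootstrap the extended theorem from \autoref{equipthm} by coupling the pair of $d$-level systems to a pair of $D$-level ancillas and arranging, via a carefully chosen ancilla preparation together with subsequent local unitaries, for the joint preparation to reduce the problem to a maximally entangled preparation on some effective $D$-dimensional joint subsystem whose outcome space can be grouped into blocks of sizes $n_1, \dots, n_d$ tracking the Born weights $|c_i|^2 \approx n_i / D$.

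First I would fix a local measurement $M_{A\otimes\een}$ with $A = \sum_i a_i[e_i]$ diagonal in the Schmidt basis and, using parameter independence, reduce the statement to showing that for every $k$ the response function satisfies $\peetje_{M_{A\otimes\een}}(a_k|\lambda) = |c_k|^2$ for $\mu_{\psi_d^S}$-almost every $\lambda$. For large $D$ I pick non-negative integers $n_1, \dots, n_d$ with $\sum_i n_i = D$ and $n_k/D \to |c_k|^2$. Using ancilla independence I append a pair of $D$-level systems; by~\eqref{AnApp} the original response function equals a $\gamma_{P_2}$-average of a response function on the joint system. On that joint system I apply \autoref{equipthm} to a preparation that is (operationally equivalent to) a maximally entangled $D$-level pair, obtaining $\lambda$-probabilities $1/D$ almost surely for the $D$ fine-grained outcomes. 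Parameter independence then lets me coarse-grain by summing the $n_k$ outcomes in block $k$, giving $n_k/D$ for the combined measurement that simulates $M_{A\otimes\een}$; pushing back through~\eqref{AnApp} and letting $D \to \infty$ along a countable sequence (discarding a countable union of $\mu_{\psi_d^S}$-null sets) produces the desired triviality.

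The principal obstacle is the coupling step, for a fundamental reason: the Schmidt spectrum is invariant under local unitaries and under appending product ancillas, so a $\psi_d^S$ with non-uniform coefficients cannot be turned into an exactly maximally entangled state on any enlarged bipartite space by such operations alone. The construction must therefore either exploit an entangled ancilla whose own Schmidt spectrum is engineered against that of $\psi_d^S$, or invoke a typicality/distillation argument across several copies, or route through a multiparty scheme in which one party's local measurement outcomes effectively project the remaining parties' shared state into the appropriate maximally entangled form. In every case the combined preparation is only approximately maximally entangled, so one has to either establish a robust version of \autoref{equipthm} --- showing that near-maximally-entangled preparations force near-uniform $\lambda$-probabilities --- or absorb the approximation into the $D \to \infty$ limit. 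A secondary but still delicate issue is that the almost-sure equality holds for each $D$ outside a possibly distinct $\mu_{\psi_d^S}$-null set, so a countable-choice argument is needed to promote the sequence of approximate equalities to a single almost-sure identity.
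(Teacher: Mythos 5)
You have correctly identified the overall architecture (rational approximation $|c_i|^2\approx n_i^\epsilon/d^\epsilon$, coupling to ancillas via ancilla independence, reduction to the maximally entangled case, a final limit) and, to your credit, you have also correctly diagnosed the central obstruction: the Schmidt spectrum is invariant under local unitaries and under appending product ancillas, so no exact reduction to \autoref{equipthm} is available. But your proposal stops at enumerating candidate remedies (engineered entangled ancilla, distillation, a robust version of \autoref{equipthm}) without carrying any of them out, and the step you do write down --- ``apply \autoref{equipthm} to a preparation that is (operationally equivalent to) a maximally entangled $D$-level pair'' --- is precisely the illegitimate move that the paper is at pains to avoid: it is the gap in the Colbeck--Renner/Leegwater arguments that Landsman criticized, namely applying a theorem proved only for exactly maximally entangled states to a state that is merely close to one.

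The paper closes this gap with two ingredients you did not supply. First, the ancilla pair is prepared in the van Dam--Hayden embezzlement state $\psi_D=C_N\sum_{k=1}^{D}k^{-1/2}\,g_k\otimes g_k$, whose engineered Schmidt spectrum allows \emph{local} unitaries $U,V$ to carry the six-party initial state $\Psi_{\mathrm{I}}$ to a state $\Psi_{\mathrm{F}}$ approaching $\psi_D\otimes\psi_{d^\epsilon}$ as $N\to\infty$ --- this is the ``entangled ancilla whose Schmidt spectrum is engineered against that of $\psi_d^S$'' option from your list, made concrete. Second, and more importantly, \autoref{equipthm} is \emph{not} invoked as a black box: the chained Bell inequality argument is re-run from scratch on the enlarged system in the state $\Psi_{\mathrm{I}}$, with all quantum expectation values computed explicitly and the deviation from perfect correlation tracked as an explicit error term (the $C_N^2\log(n^\epsilon_i n^\epsilon_{\tilde\imath})$ contributions). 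The three error sources --- chaining resolution $L$, rational approximation $\epsilon$, and embezzlement quality $N$ --- are then bounded by $\Epsilon/3$ each, chosen in that order because they are intertwined. Your closing worry about discarding a countable union of null sets is also not how the limit is taken: one bounds the single fixed integral $\int|\peetje_A(a_i|\lambda)-c_i^2|\,\dee\mu_{\psi_d^S}(\lambda)$ by an arbitrary $\Epsilon>0$, so no almost-sure statement needs to be assembled across different values of $D$.
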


A proof for this theorem is presented in \autoref{appendix2}.
As far as I know it is the first fully rigorous proof of this type of theorem.
Of course, the theorem as presented here is strictly speaking new simply because it is formulated in the language of ontic models. 
It also does not map directly onto earlier theorems and proofs because the strategy for purported proofs for Claim 1 have variations.
Nevertheless, some general remarks can be made.

The criticism of \citet{Landsman15} pertaining to earlier proofs by Colbeck and Renner can be traced to the fact that the proof makes use of three limiting operations that are intertwined and that Colbeck and Renner appear to rely on special mathematical properties of the formalism of quantum mechanics to deal with these limiting operations.
Landsman partly deals with these issues by resorting to strong assumptions, which he takes to be part of the analysis of Colbeck and Renner, but only now are being made explicit.
For example, one key strategy of the proof is to couple the initial pair of systems to another pair of $d^\epsilon$-level systems.
This second pair is then evolved to a state that is close to a maximally entangled state.
And although the equiprobability theorem only applies to exactly maximally entangled states, it is nevertheless applied.
Landsman uses here the assumption of continuity of probabilities (based on the idea that the Born rule is continuous) to bridge the gap.

\citet{Leegwater16} at first sight seems to avoid this issue by working in the limit where the $d^\epsilon$-level systems can evolve to a maximally entangled state.
This evolution in turn relies on coupling the two pairs of systems to a third pair of $D$-level systems.
As $D\to\infty$, the entangled state for the $d^\epsilon$-level systems tends towards the maximally entangled state.
But it is not clear whether working in the limit $D=\infty$ is legitimate since the proof of the equiprobability theorem itself relies on a limiting operation: the chained Bell inequalities (this is the second limiting operation).
The final limiting operation is that of approximating the quantum probabilities $\left|c_i\right|^2$ with rational numbers $\tfrac{n_i}{d^\epsilon}$ where the approximation becomes better as $d^\epsilon\to\infty$.
Here again Leegwater is not very clear about whether the limit $d^\epsilon\to\infty$ can directly be applied to a result that itself relies on a limiting operation.
These issues are avoided in the proof in \autoref{appendix2}, where all limiting operations are postponed to the final steps, and where it also becomes clear that in fact these operations are intertwined.

\section{Completeness for individual systems}\label{qubitsec}
The idea of a completeness proof for individual quantum systems seems peculiar given that non-trivial ontic models for arbitrary $d$-level quantum systems have been around since the work of \citet{Bell66} and \citet{Gudder70}.
Trivially, these models may be assumed to satisfy parameter independence since there is no second system in play with which it could interact.
\emph{So strictly speaking these models are counterexamples to Claim 1.}

The lack of a description of interaction in these models may be considered to be a serious deficiency.
In fact, in light of the extended equiprobability theorem, it seems that Claim 1 should at least be reformulated to the weaker
\begin{claim}\label{claim1p}
In any complete ontic model that satisfies ancilla independence and parameter independence every ontic model is $\psi$-complete.
\end{claim}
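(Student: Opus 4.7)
The strategy is to bootstrap \autoref{extequipthm} to individual systems by purification: couple the system to an ancilla, evolve the pair to a Schmidt-diagonal entangled state whose decomposition matches $\psi$ in the measurement eigenbasis, apply the extended equiprobability theorem there, and transfer the resulting $\mu$-almost-sure equalities back down to the single-system ontic model.

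First, I would reduce to pure-state preparations, since any mixed preparation is a convex combination of pure ones and triviality for the components entails triviality for the mixture via the averaging of $\mu_{P_\rho}$. Fix then a pure preparation $P_\psi$ of a $d$-level system $\h\cong\mathbb{C}^d$ and a measurement $M_A$, and pick an eigenbasis $(e_i)_{i=1}^d$ of $A$ so that $\psi=\sum_i\alpha_i e_i$, with $A$ assumed non-degenerate for simplicity (degeneracies can be handled by grouping outcomes). Introduce an ancilla system $\h'\cong\mathbb{C}^d$ prepared in $P_{e_1}$; the complete-ontic-model axiom supplies a joint preparation $P_\psi P_{e_1}$ represented quantum mechanically by $\psi\otimes e_1$.

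Next, take any unitary $U$ on $\h\otimes\h'$ with $U(\psi\otimes e_1)=\sum_i\alpha_i e_i\otimes e_i$ and apply the transformation $T_U$. The preparation $T_U\circ(P_\psi P_{e_1})$ is represented by a state of the form $\psi_d^S$ in \eqref{psidsdef}, and the local measurement $M_{A\otimes\een}$ is diagonal in the relevant basis, so \autoref{extequipthm} yields
\begin{equation}
	\peetje_{M_{A\otimes\een}}(a_i|\tilde\lambda)=\left|\alpha_i\right|^2
\end{equation}
for $\mu_{T_U\circ P_\psi P_{e_1}}$-almost every $\tilde\lambda\in\Lambda_{12}$. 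Pushing this almost-sure equality back through $\gamma_{T_U}$ converts it, via the composition formula, into the analogous statement for the response function $\peetje_{M_{A\otimes\een}\circ T_U}$ under $\mu_{P_\psi P_{e_1}}$; averaging then over $\gamma_{P_{e_1}}$ by ancilla independence should descend the claim to $\peetje_{M_A}$ under $\mu_{P_\psi}$.

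The sticking point is precisely this last descent. Ancilla independence, as formulated in \eqref{AnApp}, ties $\peetje_{M_A}$ to $\peetje_{M_{A\otimes\een}}$ on the un-evolved joint system, not to the post-evolution object $\peetje_{M_{A\otimes\een}\circ T_U}$, whose quantum representative $U^*(A\otimes\een)U$ is not of the factorised form covered by \eqref{AnApp}. At the Born-rule level, the compound procedure ``append $e_1$, evolve by $U$, measure $A\otimes\een$'' agrees with ``measure $A$ on $\psi$'' — but only for this particular $\psi$, by the very choice of $U$, and nothing in the stated axioms forces that coincidence of outcome statistics to lift to a $\mu_{P_\psi}$-almost-sure equality of response functions at the level of $\lambda$. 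Closing the gap demands an extra hypothesis about how unitary evolution on a composite is represented in the ontic model of the single system — exactly the implicit assumption the paper goes on to identify and to criticise.
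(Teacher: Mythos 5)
Your proposal is essentially the paper's own analysis: the purification-plus-\autoref{extequipthm} strategy is exactly the route taken in \autoref{qubitsec}, and your diagnosis that the descent from $\peetje_{M_{A\otimes\een}\circ T_U}$ back to $\peetje_{M_A}$ is the step that cannot be closed matches the paper's identification of the inference from \eqref{QMeq2} to \eqref{NoMoreMud} as the missing link. The paper likewise concludes that this claim is not provable from ancilla independence and parameter independence alone, and closes the gap only by adding the further assumption of unitary invariance (yielding Theorem 3 via the derivation \eqref{Derivation}), an assumption it then criticises as unmotivated.
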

In principle, one can imagine that under certain minimal assumptions on interactions, parameter independence becomes applicable and ontic models necessarily have to become trivial.
This seems at least to be the aim of Colbeck and Renner, and so it is useful to investigate how their strategy is supposed to work.

Consider an individual quantum system with Hilbert space $\h_1=\mathbb{C}^d$ and a preparation corresponding to the pure state
\begin{equation}
	\psi_1=\sum_{i=1}^dc_ie_i,
\end{equation}
for some orthonormal basis $\{e_1,\ldots,e_d\}$.
The focus is on a possible measurement represented by a complete self-adjoint operator $A=\sum_{i=1}^da_i[e_i]$.
It is straightforward to devise an ontic model for this system that is non-trivial with respect to this measurement for this state.
That is, a model in which 
\begin{equation}\label{nontrivqubit}
	\int\left|\peetje_A(a_i|\lambda)-\left|c_i\right|^2\right|\dee\mu_{\psi_1}(\lambda)\neq0. 
\end{equation}
This is in striking contrast to the consequence of the extended equiprobability theorem which says that for any ontic model for a pair of such systems (that is part of an appropriate complete ontic model) one has
\begin{equation}\label{equiprobeq}
	\int\left|\peetje_{A\otimes\een}(a_i|\lambda)-\left|c_i\right|^2\right|\dee\mu_{\psi_d^S}(\lambda)=0
\end{equation}
with $\psi_d^S$ as in \eqref{psidsdef}.
Could it be possible to argue from this that any ontic model for the individual system in which \eqref{nontrivqubit} is the case should be rejected?

The step needed to make the connection is most explicit in the work of \citet[\S8]{Leegwater16}.\footnote{In \citep{ColbeckRenner11} the relevant step connects to assumption QMb in the supplementary material and in \citep{ColbeckRenner15Comp} it is discussed in section 6.}
Instead of looking at $A\otimes\een$ the focus is on $\een\otimes B$ where $B=\sum_{i=1}^db_i[e_i]$.
By \autoref{equipthm} the relevant analogue of \eqref{equiprobeq} also holds for measurements represented by this operator.
The step is then that since ``by definition''\footnote{The notation of the equations \eqref{QMeq} and \eqref{MuddyEq1} has been adjusted to fit the notation in this paper.}
\begin{equation}\label{QMeq}
	\pee(a_i|A,\psi_1)=\pee(b_i|\een\otimes B,\psi_d^S)
\end{equation}
``the same relation holds when considering $\lambda$-probabilities:''
\begin{equation}\label{MuddyEq1}
	\peetje^{\psi_1}_A(a_i|\lambda)=\peetje^{\psi_d^S}_{\een\otimes B}(b_i|\lambda).
\end{equation}

Presumably, \eqref{MuddyEq1} is supposed to imply that \eqref{nontrivqubit} cannot hold in any ontic model that is part of a complete ontic model that satisfies ancilla independence and parameter independence.
Whether this is the case of course depends on what \eqref{MuddyEq1} exactly expresses.
This is not entirely trivial as a formal definition of the expression is lacking.

As a first step in fleshing out what \eqref{MuddyEq1} means, assume that the inference from \eqref{QMeq} to \eqref{MuddyEq1} is valid if and only if the inference from
\begin{equation}\label{QMeq2}
	\pee(a_i|A,\psi_1)=\pee(a_i|A\otimes\een,\psi_d^S)
\end{equation}
to
\begin{equation}\label{MuddyEq}
	\peetje^{\psi_1}_A(a_i|\lambda)=\peetje^{\psi_d^S}_{A\otimes\een}(a_i|\lambda)
\end{equation}
is valid.
This is reasonable since if one inference holds, then, by the extended equiprobability theorem, the other holds as well.

In quantum mechanics \eqref{QMeq2} holds as a consequence of the mathematical structure of the theory and how it deals with composing joint systems out of individual systems. 
But it is important to note that the equation expresses a numerical equivalence of two probabilities that are defined in separate models.
The objects $A$ and $\psi_1$ belong to the quantum model for the individual system while $A\otimes\een$ and $\psi_d^S$ belong to the pair model.
The fact that the same symbol $\pee$ is used on both sides of the equation does not mean that the equation establishes the equality of two values a single function takes on for two distinct arguments; the two functions are distinct.

To have a similar link between $\lambda$-probabilities of bipartite systems and $\lambda$-probabilities of their parts, one has to suppose that the ontic models for all relevant systems are part of (something like) a complete ontic model.
But even then it is not clear what it means for the same $\lambda$ to appear on both sides of the equation, just as it wouldn't make sense to have the same quantum state on both sides of the equation in \eqref{QMeq2}.
These are just some preliminary considerations for tackling the real issue at hand, namely, that the objects on both sides of \eqref{MuddyEq} also are not well-defined.
As these issues are intertwined I will deal with them simultaneously. 

A small part of the problem stems from trying to analyze \eqref{MuddyEq} in the language of ontic models, while Colbeck, Renner and Leegwater take a different approach.
On their approach, both $\psi_1$ and $\lambda$ are possible values for random variables, as is the measurement setting and the measurement outcome.
Then $\peetje^{\psi_1}_A(a_i|\lambda)$ expresses the probability of obtaining the outcome $a_i$ conditional on the quantum state being $\psi_1$ and the measurement being $A$ and something else being $\lambda$.\footnote{It may be noted here that even within their framework, the $\lambda$-probability with $\psi$ sticked to it is only well-defined if the conjunction of $A$, $\psi_1$ and $\lambda$ has non-zero probability.}
It may be possible that they intend the random variable that determines the quantum state to not just range over states for a specific Hilbert space, but also across different Hilbert spaces.
Then both $\psi_1$ and $\psi_d^S$ are possible values and so there is a well-defined probability that the system will be a single $d$-level system and a well-defined probability that it will be a pair, determined by the single function $\peetje$.
But this idea will be avoided here.

Another important distinction is that Colbeck, Renner and Leegwater do not presuppose that $\lambda$ provides all relevant information concerning the system and so ``adding'' $\psi$ to $\lambda$ may give more information about the possible outcomes.  
This explains the occurrence of quantum states in \eqref{MuddyEq}.
Informally speaking, the quantum state may also be taken to give relevant information within the use of an ontic model.
As it specifies the preparation of the system, it may indicate what behavior of $\lambda$ may be considered to be ``typical''.
So the notation in \eqref{MuddyEq} may be interpreted as providing a convenient way of specifying a constraint on the possible $\lambda$'s.
A way to make this precise, is to assume (like \citet{Landsman15}) that $\lambda$-probabilities with quantum states attached to them express equations that hold $\mu_{\psi_1}$-almost surely, i.e.,
\begin{equation}
\begin{gathered}
	\peetje_A^{\psi_1}(a_i|\lambda)=f(\lambda)\\
	\iff\\
	\int\left|\peetje_A(a_i|\lambda)-f(\lambda)\right|\dee\mu_{\psi_1}(\lambda)=0.
\end{gathered}
\end{equation}

This helps in clarifying how to interpret both sides of \eqref{MuddyEq}.
It does not help yet in interpreting what equality between the two means.
A problem with \eqref{MuddyEq} is that it refers to two distinct quantum states (never mind that they also belong to distinct Hilbert spaces).
\citet{Landsman15} doubles down on the almost surely interpretation here and proposes the definition
\begin{equation}\label{LandsmanCrit}
\begin{gathered}
	\peetje^{\psi_1}_A(a_i|\lambda)=\peetje^{\psi_d^S}_{A\otimes\een}(a_i|\lambda)\\
	\iff\\
	\forall f:\peetje_A^{\psi_1}(a_i|\lambda)=f(\lambda)\text{ iff }
	\peetje^{\psi_d^S}_{A\otimes\een}(a_i|\lambda)=f(\lambda).
\end{gathered}
\end{equation} 
A first problem here is that, in the current setting, this is still not well-defined as the $\lambda$'s belong to distinct ontic models and so it is not clear what kind of function $f$ could even have the proper domain.
But even if that issue can be resolved, \eqref{LandsmanCrit} is an unreasonably strong assumption, as it implies that if \eqref{MuddyEq} holds, then $p_A(a_i|\lambda)=p_{A\otimes\een}(a_i|\lambda)$ both $\mu_{\psi_1}$-almost surely and $\mu_{\psi_d^S}$-almost surely.
Moreover, distributions for $\mu_{\psi_1}$ and $\mu_{\psi_d^S}$ completely overlap on the region of $\Lambda$ where $\peetje_A(a_i|\lambda)$ is non-zero.\footnote{This may be seen as follows. 
The first claim follows from evaluating \eqref{LandsmanCrit} for the choice $f(\lambda)=p_A(a_i|\lambda)$. 
The second claim follows with a proof from contradiction. 
If there exists a $\Delta\subset\Lambda$ such that $\mu_{\psi_d^S}(\Delta)>0=\mu_{\psi_1}(\Delta)$ and on which $\peetje_{A\otimes\een}(a_i|\lambda)$ is non-zero, then for any $f$ such that $\peetje_A^{\psi_1}(a_i|\lambda)=f(\lambda)$ and $\peetje_{A\otimes\een}^{\psi_d^S}(a_i|\lambda)=f(\lambda)$ one can define an $f'$ that differs from $f$ only on $\Delta$ in such a way that $\peetje_{A\otimes\een}^{\psi_d^S}(a_i|\lambda)=f'(\lambda)$ no longer holds. 
So \eqref{MuddyEq} would fail.}
This is in strong tension with results like the BCLM theorem \citep{BCLM14}, which demonstrate that overlaps should become arbitrarily small as $d$ increases.

A solution to the problem can be found in the work of \citet{Leegwater16}, and it overlaps with the solution to the double use of $\lambda$ for physically distinct systems.
In the discussion of some of the notation used that is similar to that occurring in \eqref{MuddyEq}, Leegwater notes:
\begin{quote}
here $\lambda$ still refers to the variable assigned to system $A$ when it was in the state $\ket{\psi}_A$ [\ldots]
$\lambda$ always refers to the original system $A$, and there is only one measure $\mu(\lambda)$ that is considered. [p.21]
\end{quote}
The measure $\mu$ refers to a probability distribution over $\Lambda$ that may depend on $\ket{\psi}_A$ as well as other factors.
Given that there is only one measure considered, which is related to a particular quantum state, all other quantum states should be understood as being arrived at after interactions with a system prepared according to $\ket{\psi}_A$.

Translating this to the present discussion, it implies that $\peetje^{\psi_d^S}_{A\otimes \een}(a_i|\lambda)$ refers to a single $d$-level system that was initially prepared in the quantum state $\psi_1$ and having ontic state $\lambda$, then was coupled to a second $d$-level system with unknown state and then a transformation on the joint system was performed yielding the quantum state $\psi_d^S$.
Within a complete ontic model this process can be modeled.
For sake of definiteness, assume that the second system was prepared according to the quantum state $\phi$.
Appending this system to the first one is then modeled by a Markov kernel $\gamma_\phi$ from the initial ontic model to the model for the joint system.
Now let $U$ be a unitary operator such that $U(\psi_1\otimes\phi)=\psi_d^S$ and let $\gamma_U$ be a Markov kernel that models it.
A proper reformulation of \eqref{MuddyEq} is then that\footnote{Note that both integrals are over the ontic states for the joint system, while $\lambda$ is an ontic state for the initial system.}
\begin{equation}\label{NoMoreMud}
	\peetje_A(a_i|\lambda)=\iint\peetje_{A\otimes\een}(a_i|\tilde{\lambda})\gamma_U(\dee\tilde{\lambda}|\lambda')\gamma_{\phi}(\dee\lambda'|\lambda)
\end{equation}
holds $\mu_{\psi_1}$-almost surely.

This reconstruction seems to be on the right track.
If the inference from \eqref{QMeq2} to \eqref{NoMoreMud} is valid, then the extended equiprobability theorem can be used to prove the completeness claim.
Since $\mu_{T_U\circ T_\phi\circ P_{\psi_1}}$ corresponds to a preparation that can be represented by the state $\psi_d^S$, it follows that $\peetje_{A\otimes\een}(a_i|\tilde{\lambda})$ is almost surely equal to the quantum probability $\left|c_i\right|^2$ with respect to this measure.
It then follows from \eqref{NoMoreMud} that $\peetje_A(a_i|\lambda)$ is equal to the quantum probability $\mu_{\psi_1}$-almost surely.
The remaining question is then why the inference from \eqref{QMeq2} to \eqref{NoMoreMud} should be valid. 
In the next section I argue that no uncontroversial arguments are available.

\section{Unitary processes and measurements}\label{critsec}
\subsection{A valid alternative to Claim 1}
It is not obvious from the works of Colbeck, Renner and Leegwater what should be taken to be the main argument for the inference from \eqref{QMeq2} to \eqref{NoMoreMud}.
To get a handle on what it takes for \eqref{NoMoreMud} to hold, I show that it can be derived with the help of two assumptions.
The first assumption is ancilla independence as introduced in \autoref{partialsec}.
The second I call unitary invariance.
In spirit it is similar to the assumption of unitary invariance adopted by \citet{Landsman15}, but it avoids problems similar to those related to \eqref{LandsmanCrit}.
\begin{definition}
An ontic model is said to satisfy \emph{unitary invariance} if for every $\rho,U,A,a$
\begin{equation}\label{FatU}
\begin{gathered}
	\mathrm{Tr}\left(\rho [a]_A\right)=\mathrm{Tr}\left(U\rho U^*[a]_A\right)\\
	\implies\\
	\peetje_{M_A}(a|\lambda)=\int \peetje_{M_A}(a|\tilde{\lambda})\gamma_{T_U}(\dee\tilde{\lambda}|\lambda)~
	\mu_{P_\rho}\text{-a.s.}
\end{gathered}
\end{equation}
\end{definition}
This definition states that, given some preparation represented by $\rho$, whenever some transformation does not alter the probabilities for a particular measurement and outcome according to quantum mechanics, then also the $\lambda$-probabilities are not altered by the transformation (almost surely with respect to the preparation).\footnote{It is not to be confused with the notion of transformation noncontextuality \citep{Spekkens05}, which states that if two transformation affect all the operational probabilities in the same way, then they should also affect all the $\lambda$-probabilities in the same way.}

With these assumptions \eqref{NoMoreMud} can indeed be derived:\footnote{Here $U$ is as before, i.e., $U(\psi_1\otimes\phi)=\psi^D_s$.  Ancilla independence is used in the first step, unitary invariance is used in the final step.}
\begin{multline}\label{Derivation}
	\int\left|
	\peetje_A(a_i|\lambda)\vphantom{\iint}
	-\int \peetje_{A\otimes\een}(a_i|\tilde{\lambda})\gamma_{U\circ\phi}(\dee\tilde{\lambda}|\lambda)
	\right|\dee\mu_{\psi_1}(\lambda)\\
	\begin{aligned}
	={}&
	\int\left|
	\int\peetje_{A\otimes\een}(a_i|\lambda')\gamma_\phi(\dee\lambda'|\lambda)
	-
	\int \peetje_{A\otimes\een}(a_i|\tilde{\lambda})\gamma_{U\circ\phi}(\dee\tilde{\lambda}|\lambda)
	\right|\dee\mu_{\psi_1}(\lambda)\\
	\leq{}&
	\iint\left|
	\peetje_{A\otimes\een}(a_i|\lambda')
	\vphantom{\int}
	-
	\int \peetje_{A\otimes\een}(a_i|\tilde{\lambda})\gamma_U(\dee\tilde{\lambda}|\lambda')
	\right|\gamma_\phi(\dee\lambda'|\lambda)\dee\mu_{\psi_1}(\lambda)\\
	={}&
	\int\left|
	\peetje_{A\otimes\een}(a_i|\lambda)
	\vphantom{\int}
	-
	\int \peetje_{A\otimes\een}(a_i|\tilde{\lambda})\gamma_U(\dee\tilde{\lambda}|\lambda)
	\right|\dee\mu_{\psi_1\otimes\phi}(\lambda)
	=0.
	\end{aligned}
\end{multline}
We thus obtain a proof for the following theorem (which is to be contrasted to \autoref{claim1} and \autoref{claim1p}):
\begin{theorem}
In any complete ontic model that satisfies ancilla independence, unitary invariance and parameter independence every ontic model is $\psi$-complete. 
\end{theorem}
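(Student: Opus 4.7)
The plan is to reduce the individual-system claim to the bipartite setting in which \autoref{extequipthm} already gives the desired triviality, using ancilla independence to move from $\Lambda_1$ to $\Lambda_{12}$ and unitary invariance to re-shape the joint state into Schmidt-diagonal form. For notational ease I would focus on pure preparations $P_{\psi_1}$ and a complete observable $A=\sum_ia_i[e_i]$; the mixed case reduces by convex decomposition of $\mu_{P_\rho}$ into pure components, and degenerate $A$ by refining the spectral decomposition into one-dimensional projections and summing.

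Concretely, expand $\psi_1=\sum_ic_ie_i$ in the eigenbasis of $A$, pick a second $d$-level system with an auxiliary pure-state preparation $P_\phi$, and choose a unitary $U$ on $\h_1\otimes\h_2$ with $U(\psi_1\otimes\phi)=\psi_d^S\ldef\sum_ic_ie_i\otimes e_i$. Preparing $\psi_1$, appending $\phi$ via the Markov kernel $\gamma_\phi$ supplied by the complete-ontic-model structure, and then applying $\gamma_U$ yields a bipartite preparation represented by $\psi_d^S$. Since $A\otimes\een$ is diagonal in a Schmidt basis of $\psi_d^S$, \autoref{extequipthm} applies and gives $\peetje_{A\otimes\een}(a_i|\tilde\lambda)=\left|c_i\right|^2$ almost surely with respect to the ontic measure on $\Lambda_{12}$ so obtained.

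To pull this equality back to $\Lambda_1$, I would first invoke ancilla independence to rewrite $\peetje_A(a_i|\lambda)$ as an integral of $\peetje_{A\otimes\een}(a_i|\cdot)$ against $\gamma_\phi(\cdot|\lambda)$. Next, since $[\psi_1\otimes\phi]$ and $U[\psi_1\otimes\phi]U^*=[\psi_d^S]$ both assign Born probability $\left|c_i\right|^2$ to outcome $a_i$ of $A\otimes\een$, unitary invariance produces the almost-sure equality $\peetje_{A\otimes\een}(a_i|\lambda)=\int\peetje_{A\otimes\een}(a_i|\tilde\lambda)\gamma_U(\dee\tilde\lambda|\lambda)$ under $\mu_{\psi_1\otimes\phi}$. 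Chaining the two identities via the triangle inequality, and applying the complete-ontic-model relation $\int\gamma_\phi(\dee\lambda'|\lambda)\dee\mu_{\psi_1}(\lambda)=\dee\mu_{\psi_1\otimes\phi}(\lambda')$ through Fubini, collapses the resulting estimate to zero and yields $\peetje_A(a_i|\lambda)=\left|c_i\right|^2$ for $\mu_{\psi_1}$-almost every $\lambda$.

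I expect the main difficulty to be bookkeeping across three distinct ontic measures ($\mu_{\psi_1}$ on $\Lambda_1$ together with $\mu_{\psi_1\otimes\phi}$ and the $\psi_d^S$-preparation measure on $\Lambda_{12}$), and in particular confirming that the composite preparation described above is genuinely represented by $\psi_d^S$ so that \autoref{extequipthm} is applicable, and that no null-set pathology spoils the transfer back to $\Lambda_1$ once the two almost-sure equalities are chained. The conceptual content is carried entirely by ancilla independence, unitary invariance and the extended equiprobability theorem; what remains is careful measure-theoretic organization of the integrals.
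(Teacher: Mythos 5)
Your proposal is correct and follows essentially the same route as the paper: ancilla independence to pull $\peetje_A$ up to the joint model, unitary invariance (whose hypothesis you rightly verify via $\expv{\psi_1\otimes\phi}{[e_i]\otimes\een}{\psi_1\otimes\phi}=\left|c_i\right|^2=\expv{\psi_d^S}{[e_i]\otimes\een}{\psi_d^S}$) to transport the response function along $\gamma_U$, a triangle-inequality/change-of-measure chain to establish \eqref{NoMoreMud} $\mu_{\psi_1}$-almost surely, and then \autoref{extequipthm} applied to the $\psi_d^S$-preparation $\mu_{T_U\circ T_\phi\circ P_{\psi_1}}$. This is exactly the derivation \eqref{Derivation} in the paper.
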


The value of this theorem depends on whether unitary invariance is a reasonable assumption.\footnote{Assuming for convenience that the reader does think setting independence and parameter independence are reasonable assumptions.}
However, it is not trivial to come up with a motivation for unitary invariance.
This is problematic, as this assumption on its own has immediate consequences that are quite crucially related to the notion of completeness.
Note that completeness boils down to the idea that response functions are dispersion free under preparations of pure quantum states.
Now consider the example where $\expv{\psi}{[a]_A}{\psi}=\expv{U\psi}{[a]_A}{U\psi}$.
Unitary invariance then has the following consequence for the dispersion of any corresponding response function:
\begin{equation}
\begin{split}
	\mathrm{Var}_\psi(\peetje_A(a|\:.\:))
	=&
	\int \peetje_A(a|\lambda)^2\dee\mu_\psi(\lambda)-\left(\int \peetje_A(a|\lambda)\dee\mu_\psi(\lambda)\right)^2\\
	=&
	\int \left(\int \peetje_A(a|\tilde{\lambda})\gamma_U(\dee\tilde{\lambda}|\lambda)\right)^2\dee\mu_\psi(\lambda)
	-
	\left(\iint \peetje_A(a|\tilde{\lambda})\gamma_U(\tilde{\lambda}|\lambda)\dee\mu_\psi(\lambda)\right)^2\\
	\leq&
	\iint \peetje_A(a|\tilde{\lambda})^2\gamma_U(\dee\tilde{\lambda}|\lambda)\dee\mu_\psi(\lambda)
	-
	\left(\iint \peetje_A(a|\tilde{\lambda})\gamma_U(\tilde{\lambda}|\lambda)\dee\mu_\psi(\lambda)\right)^2\\
	=&
	\mathrm{Var}_{U\psi}(\peetje_A(a|\:.\:)).
\end{split}
\end{equation}

It follows that if one can show that for some quantum state $U\psi$ the response function is dispersion free for a particular measurement, then the response function must also be dispersion free for any other quantum state $\psi$ whenever the operational probabilities for the measurement are equal.
Since this gets at the heart of the Colbeck-Renner completeness claim, this implicit assumption should have at least been explicitly stated and preferably be well-mo\-ti\-vated. 
Neither appears to be the case though.

\subsection{An appeal to measurement theory}

The crucial inferences that implicitly rely on unitary invariance occur in the setting where measurement procedures are considered  \citep[p.4-5]{ColbeckRenner11}, \citep[p.515]{ColbeckRenner15Comp} and \citep[\S8]{Leegwater16}. 
Possibly then, an appeal to measurement procedures can help motivate unitary invariance or some weaker version thereof.
Arguably, an appeal to what actually happens in a measurement process is not a very elegant strategy in a theory that is infamous for its measurement problem.
One should be careful not to adopt assumptions for the ontic model that don't sit well within quantum mechanics itself.

Ontic models are usually not designed to resolve the measurement problem, as the Beltrametti-Bugajski model nicely illustrates.
More generally, in any ontic model probabilities for outcomes of measurements are well-defined without any mentioning of how the relevant system would interact with a measurement apparatus.
Ideally, this peculiarity is to be solved in a complete ontic model. 
The $\lambda$-probability $\peetje_M(a|\lambda)$ then encodes the probability with which the system, upon interaction with an appropriate measurement apparatus, would evolve towards a joint state in which the measurement apparatus can be taken to be in a well-defined pointer state displaying the outcome of the measurement.
Explicitly, if $\Delta^M_a$ denotes the set of ontic states of the joint system in which the apparatus displays the outcome $a$, then ideally there is a Markov kernel $\gamma_M$ modeling the interaction between system and apparatus such that
\begin{equation}\label{MeasureOntic}
	\peetje_M(a|\lambda)=\gamma_M\left(\Delta^M_a\middle|\lambda\right).
\end{equation}
A motivation for unitary invariance in the context of measurements could then boil down to an assumption concerning $\gamma_M$.

That an appeal to measurement processes can be relevant for unitary invariance can be seen from a simple example.
According to quantum theory, instead of directly measuring $A$ on the initial system in the state $\psi_1$, one may equivalently first entangle it with a second system to obtain the joint state $\psi_d^S$ and measure instead $B$ on the second system.
The unitary process here is non-disturbing with respect to the observable measured.
Unitary invariance comes down to the idea that this is also always a legitimate procedure for the ontic model.
However, the inference from operational non-disturbance to non-disturbance on the level of ontic models is known to be problematic \citep{Maroney17}.

An appeal to measurement processes to motivate the \emph{general} validity of unitary invariance thus seems implausible.
However, for the proof of Theorem 3 it is only used in certain special circumstances.
Given a system and a measurement $A$ and outcome $a_i$, it needs to be shown that the $\lambda$-probability $\peetje_A(a_i|\lambda)$ equals the quantum probability almost surely (with respect to the preparation of the quantum state).
What happens on the level of the ontic model is that during the measurement procedure the system evolves according to $\gamma_M$.
Now, if one could argue that $\gamma_M$ actually involves a unitary process by which the system plus apparatus evolve to the quantum state $\psi^S_d$ after which the outcome is read of from the apparatus, then maybe the extended equiprobability theorem can be applied and one can conclude that indeed $\peetje_A(a_i|\lambda)$ equals the quantum probability almost surely.

The transition from $\psi_1$ to $\psi_d^S$ is of course well-known as part of the von Neumann measurement scheme.
The usual problem is then that the state $\psi_d^S$, on its own, does not signify a situation in which a measurement outcome is obtained.
Single world unitary quantum mechanics without hidden variables is not compatible with the assumption that measurements have definite outcomes.

The question is how the measurement process $\gamma_M$ relates to the von Neumann-type process $\gamma_U\circ\gamma_\phi$.
And this may of course depend on how the complete ontic model is supposed to solve the measurement problem.\footnote{It is not entirely clear if Colbeck and Renner expect the alternative theory to solve the measurement problem. Doing so would seem unreasonable as it sets a higher standard for the alternative theory than adopted for quantum mechanics. But not doing so makes the appeal to measurement processes quite awkward. Regardless, the following arguments do not rely on whether or not $\gamma_M$ actually satisfies \eqref{MeasureOntic}. It just denotes the measurement process in the ontic model, whatever it may be.}
In a spontaneous collapse theory, the macroscopic superposition $\psi_d^S$ is extremely unlikely to obtain and $\gamma_M$ contains a collapse way before the unitary evolution $U$ is completed.
Consequently, an appeal to measurements is irrelevant for the question whether \autoref{claim1p} applies to such theories and a general assumption like unitary invariance is required instead.

To be fair, in the first paper \citet{ColbeckRenner11} did assume that ``all processes within quantum theory can be considered as unitary evolutions''.
Although this explicit statement is dropped in any of the following papers, a charitable reading is that the completeness claim was never meant to apply to collapse theories.\footnote{It may be noted that by such a step it must be conceded that the theorem of Colbeck and Renner is not on an equal footing with results like Bell's theorem, the Kochen-Specker theorem or the PBR theorem.}
But even if it is accepted that $\gamma_M$ should correspond to a unitary process, it is not obvious that it should be operationally equivalent to $\gamma_U\circ\gamma_\phi$ (i.e., can be represented by the same unitary operator).
As an alternative process, consider some unitary operation $U'$ that evolves the total system to a state of the form $\sum_{i=1}^dc_i\chi\otimes e_i$ for some arbitrary pure state $\chi$.
This represents as much a process in which the final state describes an apparatus that displays the outcome $a_i$ with probability $|c_i|^2$ as does the operation $U$.
And there is no principled reason that in ontic models the measurement process $\gamma_M$ can not be of this form.\footnote{I picked up this idea from \citep{Wallace19}, which contains more insightful discussion of what ``orthodox quantum mechanics'' should be taken to mean.} 
And if it is, the extended equiprobability theorem cannot be applied.\footnote{One may be inclined to insist that within quantum mechanics such a unitary process $U'$ does not represent a measurement. But that is beside the point, as we are considering here possible measurement processes within ontic models for which the only criterion is that the Born rule probabilities are reproduced.}


Thus we obtain again a further restriction on the set of complete ontic models to which the theorem is to be applied.
The models must be such that all processes correspond to unitary processes and measurement processes like $\gamma_M$ necessarily lead to entangled states with perfect correlations.
To see if at least this claim is valid, it is worthwhile to quickly recap what the argument for it has become.
The procedure to measure $A$ on a system, involves hooking it up to a measurement device, evolving the joint system to $\psi_d^S$ and then reading of the value of $B$ from the device.
Reading of this value constitutes a measurement of $B$ on the measurement device.
Because in a no-collapse theory with von Neumann measurement processes this is roughly what it means to measure $A$, the outcome statistics for $B$ on the joint system should equal the outcome statistics for $A$.
But, if in addition the criteria for the extended equiprobability theorem are in place, then the outcome statistics for $B$ should be equal to the quantum statistics and so also the statistics for $A$ should equal the quantum statistics.

Applicability of the extended equiprobability theorem requires parameter independence.
Thus it has to be assumed that it is possible to postpone the $B$-measurement until system and apparatus are again spatially separated.
Moreover, at this stage it should also be possible to refrain from performing a $B$-measurement, and instead perform any other measurement on the apparatus and, in addition, to perform any other measurement on the system as well.
If not, the chained Bell-inequalities upon which the proof for \autoref{extequipthm} rests cannot come off the ground.

It first may be acknowledged that in unitary quantum mechanics it should be possible in principle to entangle a system with an apparatus, then spatially separate them and in turn perform arbitrary measurements on both systems. 
It is of course not clear how we are to understand a measurement on an apparatus that is not a measurement in the pointer basis, just like it is difficult to measure a cat in a basis that contains superpositions of life and death states.
But we should not let such psychological worries affect our judgments.

Thus there should be a process $\gamma_U\circ\gamma_\phi$ in the ontic model that establishes an entangled state for the joint system such that after which any measurement can be performed on each of the components of the system.
Is it reasonable to think of $\gamma_M$ as such a process?
Here it is good to remind ourselves that in an ontic model the physical quantities may correspond only to dispositional properties.\footnote{This separates ontic models from more traditional hidden variable models, in which value definiteness of physical quantities is usually taken to be among the criteria.}
The process $\gamma_M$ may be seen to be special in the sense that it brings about such a dispositional property.\footnote{This is not necessarily re-introducing the measurement problem. What is special about $\gamma_M$ is that it brings about value definiteness, and it is reasonable to take that as a necessary requirement for the processes that we call measurements.}

Now once a certain dispositional property is brought about, it is not per se meaningful to say that dispositional properties before the measurement encoded by other response functions are still there. 
What we can have is an effective collapse.
This is akin to an Everettian multiverse view. 
In that scenario, once we have done a measurement we find ourselves in a branch that is ``cut off'' from other branches. 
Whatever future processes we may apply to the system after branching, these processes apply only to the branch we find ourselves on.
Future measurements do no longer bring about dispositional properties of the system prior to the first measurement, but only of a branched off version of the system.
In a single world universe, the idea is that once a context of measurement has been brought about, we have cut ourselves off from other possible contexts.\footnote{One might worry that this idea leads to possible conflicts with unitary quantum mechanics (on some specific reading of what unitary quantum mechanics actually is). But all I am using here is the, I think, uncontroversial idea that, if I read the outcome on a measurement device twice, then with probability one the second reading corresponds to the first reading, even if the quantum state has not collapsed to the corresponding eigenstate. This is just what effective collapse means.}
This idea is of course very Bohrian in spirit.
In fact, Bohr's response to Einstein, Podolsky and Rosen, when taken out of context, seems strangely apt to qualify the issue regarding the stage when $\psi_d^S$ obtains, if one changes just one word, and it is perfectly fit to serve as the main conclusion of this paper:
\begin{quote}
But even at this stage there is essentially the question of \emph{an influence on the very conditions which define the possible types of predictions regarding the future behavior of the system}.
Since these conditions constitute an inherent element of the description of any phenomenon to which the term ``physical reality'' can properly be attached, we see that the argumentation of the mentioned authors does not justify their conclusion that the quantum-mechanical description is essentially [complete]. \citep[p.700]{Bohr35}
\end{quote}

\section*{Acknowledgments}
I would like to thank Guido Bacciagaluppi, Klaas Landsman and Gijs Leegwater for fruitful and intense discussions on the work of Colbeck and Renner. 
Without them, this paper never would have seen the light of day.
I further would like to thank the people at the BSPS and ILMPS conferences, where I was given the opportunity to present the main ideas in this paper.
And also two anonymous referees for some serious comments that have attributed to major improvements of the paper.
This research was funded by the Netherlands Organisation for Scientific Research (NWO), Veni Project No. 275-20-070.

\begin{appendix}
\section{Proof of the equiprobability theorem}\label{EqProbSec}
In this appendix the equiprobability theorem is proven.
For convenience the theorem is formulated again here.
\begin{theorem*}
Consider the quantum system of a pair of $d$-level systems.
Any ontic model for this system that satisfies parameter independence is trivial w.r.t. $\mathcal{M}_{\mathrm{LOC}}$ and all preparations represented by the state
\begin{equation*}
	\psi_d=\sum_{i=1}^{d}\frac{1}{\sqrt{d}}e_i\otimes e_i.
\end{equation*}
\end{theorem*}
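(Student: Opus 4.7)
The plan is to exploit the symmetry of $\psi_d$: its reduced state on either side is maximally mixed, so any local non-degenerate observable has quantum probability $1/d$ for each outcome, and it suffices to show that the corresponding response function $\peetje_{M_{A\otimes\een}}(a|\lambda)$ is $\mu_{\psi_d}$-almost surely constant in $a$. Joint measurements $M_{A\otimes B}$ and degenerate local measurements in $\mathcal{M}_{\mathrm{LOC}}$ can then be handled via parameter independence together with the standard relation $(A\otimes\een)\psi_d = (\een\otimes A^T)\psi_d$, which couples outcomes on the two sides.

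The second ingredient is perfect correlations. For any orthonormal basis $(f_i)$ of $\mathbb{C}^d$ used on Alice's side, the complex-conjugate basis $(\bar f_i)$ (with respect to $(e_i)$) on Bob's side gives joint outcomes $(i,j)$ with quantum probability $\tfrac{1}{d}\delta_{ij}$. In any ontic model these perfect correlations must lift to the $\lambda$-level: the integral $\int \peetje_{M_{A\otimes \bar A}}(i,j|\lambda)\dee\mu_{\psi_d}(\lambda)=0$ for $i\neq j$ forces $\peetje_{M_{A\otimes\bar A}}(i,j|\lambda)=0$ $\mu_{\psi_d}$-almost surely, and parameter independence then yields $\peetje_{M_{A\otimes\een}}(i|\lambda)=\peetje_{M_{\een\otimes\bar A}}(i|\lambda)$ almost surely.

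The third step is the chained Bell argument. Fix outcomes $i\neq i'$ and choose two Alice-side bases whose Bob-side conjugates $\bar A$ and $\bar A'$ are related by a rotation that permutes $i$ and $i'$. Interpolate $\bar A$ to $\bar A'$ through $N$ bases at equal angular steps and alternate Alice's two measurements with the $N+1$ Bob-side measurements in the standard Colbeck--Renner zigzag pattern. Quantum mechanics assigns each consecutive pair a disagreement probability of order $1/N^2$, so the summed disagreement along the chain is $O(1/N)$. Parameter independence lifts this to $\lambda$-probabilities: at each link the marginal of $\peetje_{M_{A\otimes B_k}}(\,\cdot\,|\lambda)$ is well-defined, and a triangle inequality bounds the probability that the perfectly correlated outcome flips across the chain by the sum of consecutive-step flip probabilities. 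Integrating over $\mu_{\psi_d}$ and sending $N\to\infty$ yields $\peetje_{M_{A\otimes\een}}(i|\lambda)=\peetje_{M_{A\otimes\een}}(i'|\lambda)$ $\mu_{\psi_d}$-almost surely; since $i,i'$ were arbitrary and the probabilities sum to one, each equals $1/d$.

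The main technical obstacle is this last step: rigorously importing the chained Bell inequality into the ontic-model setting. Parameter independence alone gives marginal consistency rather than the full joint factorisability that makes the classical chained inequality transparent; what rescues the argument is that each link of the chain involves only one joint measurement, so the triangle inequality can be applied to well-defined $\lambda$-marginals arising from parameter independence together with the perfect-correlation identity established in the second step. Care is also needed in handling the limit $N\to\infty$, but because the relevant bounds are uniform in $\lambda$ prior to integration, dominated convergence closes the argument and delivers the triviality claim.
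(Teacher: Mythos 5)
Your proposal is correct and follows essentially the same route as the paper's proof: reduce to showing the $\lambda$-probabilities for the outcomes of a complete local observable are equal (using the symmetry of $\psi_d$), lift the perfect correlations to the $\lambda$-level via parameter independence, and close the gap between two outcomes $i\neq i'$ with a chained Bell inequality over interpolating rotated measurements, with total quantum disagreement $O(1/N)$ vanishing as $N\to\infty$; degenerate observables are then handled through perfect correlation with a complete observable on the other wing. The only cosmetic difference is that you isolate the perfect-correlation identity as a separate lemma before chaining, whereas the paper absorbs it into the endpoint term of the chain (and note that no dominated-convergence step is actually needed, since the $N$-independent integral is bounded by $O(1/N)$ for every $N$).
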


\begin{proof}
Throughout, assume that $\mu_{\psi_d}$ is an arbitrary, but fixed, probability measure corresponding to a preparation of $\psi_d$.
First consider a complete local measurement $M_A$ with possible outcomes $a_1,a_2,\ldots,a_d$ corresponding to the operator $A\otimes\een$.
For the state $\psi_d$, the probability for each possible outcome is $\tfrac{1}{d}$.
To show that the $\lambda$-probabilities are identical, it suffices to show that the $\lambda$-probabilities for all possible outcomes are equal:
\begin{equation}
\begin{split}
	\int\left|\peetje_{M_A}(a_i|\lambda)-\tfrac{1}{d}\right|\dee\mu_{\psi_d}(\lambda)
	&=
	\tfrac{1}{d}\int\left|d\peetje_{M_A}(a_i|\lambda)-\sum_{j=1}^d\peetje_{M_A}(a_j|\lambda)\right|\dee\mu_{\psi_d}(\lambda)\\
	&\leq
	\tfrac{1}{d}\sum_{j=1}^d\int\left|\peetje_{M_A}(a_i|\lambda)-\peetje_{M_A}(a_j|\lambda)\right|\dee\mu_{\psi_d}(\lambda).
\end{split}
\end{equation}
So it remains to be shown that
\begin{equation}\label{equibasic}
	\int\left|\peetje_{M_A}(a_i|\lambda)-\peetje_{M_A}(a_j|\lambda)\right|\dee\mu_{\psi_d}(\lambda)=0
\end{equation}
for all $i,j$.

Without loss of generality, it may be assumed that each $e_i$ is an eigenstate for $A$ for the eigenvalue $a_i$.
Let $B=\sum_{i=1}^db_i[e_i]$ be a second arbitrary complete self-adjoint operator and let $M_B$ be a local measurement corresponding to the operator $\een\otimes B$.

Now let $i,j$ be given.
If $i=j$, then \eqref{equibasic} holds trivially, so assume $i\neq j$.
For $\theta\in[0,\tfrac{\pi}{2}]$ the unitary operator $U_\theta$ is defined by the actions
\begin{equation}
\begin{gathered}
	U_\theta e_i=\cos\theta e_i+\sin\theta e_j,~ 
	U_\theta e_j=\cos\theta e_j-\sin\theta e_i,~
	U_\theta e_k=e_k~\text{when }k\neq i,j.
\end{gathered}
\end{equation}
Further define
\begin{equation}
	A_\theta\ldef U_\theta AU_\theta^*,~
	B_\theta\ldef U_\theta BU_\theta^*
\end{equation}
and let $M_{A_\theta}$ and $M_{B_\theta}$ denote local measurements corresponding to the operators $A_\theta\otimes\een$ and $\een\otimes B_\theta$ respectively.
For simplicity, it may be assumed that $M_{A_0}=M_A$, and the $M$ will be dropped in the notation.

For every $N\in\mathbb{N}$ the angle between $[0,\tfrac{\pi}{2}]$ can be divided in $2N+1$ equal sized smaller angles.
For $n\leq N$ set 
\begin{equation}
	A_{N,n}=A_{\frac{2n\pi}{(2N+1)2}},~
	B_{N,n}=B_{\frac{2(n+1)\pi}{(2N+1)2}}.
\end{equation}
This is used to make the estimate
\begin{multline}\label{YetAnotherEstimate}
	\int\left|\peetje_{A}(a_i|\lambda)-\peetje_{A}(a_j|\lambda)\right|\dee\mu_{\psi_d}(\lambda)\\
	\begin{aligned}
	\leq&
	\sum_{n=0}^{N}
	\int\left|\peetje_{A_{N,n}}(a_i|\lambda)-\peetje_{B_{N,n}}(b_i|\lambda)\right|\dee\mu_{\psi_d}(\lambda)
	+
	\sum_{n=0}^{N-1}
	\int\left|\peetje_{B_{N,n}}(b_i|\lambda)-\peetje_{A_{N,n+1}}(a_i|\lambda)\right|\dee\mu_{\psi_d}(\lambda)\\
	&+
	\int\left|\peetje_{B_{\frac{\pi}{2}}}(b_i|\lambda)-\peetje_{A}(a_j|\lambda)\right|\dee\mu_{\psi_d}(\lambda).
	\end{aligned}
\end{multline}

The final term in \eqref{YetAnotherEstimate} equals zero.
To show this, parameter independence is used to make the estimate
\begin{equation}
\begin{split}
	\peetje_{B_{\frac{\pi}{2}}}(b_i|\lambda)-\peetje_{A}(a_j|\lambda)
	&=
	\sum_{k=1}^d\peetje_{A\otimes B_{\frac{\pi}{2}}}(a_k,b_i|\lambda)-\sum_{l=1}^d\peetje_{A\otimes B_{\frac{\pi}{2}}}(a_j,b_l|\lambda)\\
	&=
	\sum_{k\neq j}\peetje_{A\otimes B_{\frac{\pi}{2}}}(a_k,b_i|\lambda)-\sum_{l\neq i}\peetje_{A\otimes B_{\frac{\pi}{2}}}(a_j,b_l|\lambda)\\
	&\leq
	\sum_{k\neq j}\peetje_{A\otimes B_{\frac{\pi}{2}}}(a_k,b_i|\lambda)+\sum_{l\neq i}\peetje_{A\otimes B_{\frac{\pi}{2}}}(a_j,b_l|\lambda).	
\end{split}
\end{equation}
Using this, it follows that
\begin{equation}
\begin{split}
	\int\left|\peetje_{B_{\frac{\pi}{2}}}(b_i|\lambda)-\peetje_{A}(a_j|\lambda)\right|\dee\mu_{\psi_d}(\lambda)
	\leq&
	\sum_{k\neq j}\int\peetje_{A\otimes B_{\frac{\pi}{2}}}(a_k,b_i|\lambda)\dee\mu_{\psi_d}(\lambda)
	+
	\sum_{l\neq i}\int\peetje_{A\otimes B_{\frac{\pi}{2}}}(a_j,b_l|\lambda)\dee\mu_{\psi_d}(\lambda)\\
	=&
	\sum_{k\neq j}\expv{\psi_d}{[e_k\otimes U_{\frac{\pi}{2}}e_i]}{\psi_d}
	+
	\sum_{l\neq i}\expv{\psi_d}{[U_{\frac{\pi}{2}}e_j\otimes e_l]}{\psi_d}\\
	=&
	\sum_{k\neq j}\expv{\psi_d}{[e_k\otimes e_j]}{\psi_d}
	+
	\sum_{l\neq i}\expv{\psi_d}{[e_i\otimes e_l]}{\psi_d}=0,
\end{split}
\end{equation}
where the final step follows from the perfect correlations in $\psi_d$.

For the other $2N+1$ terms in \eqref{YetAnotherEstimate} the correlations are not perfect.
But with parameter independence one still has for each term the estimate 
\begin{equation}
\begin{split}
	\int\left|\peetje_{A_\theta}(a_i|\lambda)-\peetje_{B_\phi}(b_i|\lambda)\right|\dee\mu_{\psi_d}(\lambda)
	\leq&
	\sum_{k\neq i}\int\peetje_{A_\theta\otimes B_\phi}(a_i,b_k|\lambda)\dee\mu_{\psi_d}(\lambda)
	+
	\sum_{l\neq i}\int\peetje_{A_\theta\otimes B_\phi}(a_l,b_i|\lambda)\dee\mu_{\psi_d}(\lambda)\\
	=&
	\sum_{k\neq i}\expv{\psi_d}{[U_\theta e_i\otimes U_\phi e_k]}{\psi_d}
	+
	\sum_{l\neq i}\expv{\psi_d}{[U_\theta e_l\otimes U_\phi e_i]}{\psi_d}\\
	=&
	\expv{\psi_d}{[U_\theta e_i\otimes U_\phi e_j]}{\psi_d}
	+
	\expv{\psi_d}{[U_\theta e_j\otimes U_\phi e_i]}{\psi_d}.
\end{split}
\end{equation}
And for the expectation values one has
\begin{equation}
\begin{split}
	\expv{\psi_d}{[U_\theta e_i\otimes U_\phi e_j]}{\psi_d}
	&=
	\frac{1}{d}\sum_{k=i,j}\sum_{l=i,j}
	\expv{e_k\otimes e_k}{[U_\theta e_i\otimes U_\phi e_j]}{e_l\otimes e_l}\\
	&=
	\frac{1}{d}\sum_{k=i,j}\sum_{l=i,j}
	\braket{e_k}{U_\theta e_i}\braket{e_k}{U_\phi e_j}
	\braket{U_\theta e_i}{e_l}\braket{U_\phi e_j}{e_l}\\
	&=
	\frac{1}{d}\sum_{k=i,j}
	\braket{e_k}{U_\theta e_i}\braket{e_k}{U_\phi e_j}\left(\cos\theta\sin\phi-\sin\theta\cos\phi\right)\\
	&=
	\frac{1}{d}\left(\cos\theta\sin\phi-\sin\theta\cos\phi\right)^2=\frac{1}{d}\sin^2(\phi-\theta).
\end{split}
\end{equation}

Making use of the fact that for all the $2N+1$ terms in \eqref{YetAnotherEstimate} $\left|\theta-\phi\right|=\tfrac{\pi}{(2N+1)2}$:
\begin{equation}
	\int\left|\peetje_{A}(a_i|\lambda)-\peetje_{A}(a_j|\lambda)\right|\dee\mu_{\psi_d}(\lambda)
	\leq
	(2N+1)\frac{2}{d}\sin^2\left(\frac{\pi}{(2N+1)2}\right)
	\leq
	\frac{\pi^2}{2d(2N+1)}.
\end{equation}
Since this holds for all $N\in\mathbb{N}$, the proof for \eqref{equibasic} is complete.
 
Finally, consider the case where $A$ is a non-maximal operator.
Assume again that each $e_i$ is an eigenstate for $A$.
Let $a_k$ be any eigenvalue for $A$ with degeneracy $n_k$ and complete set of eigenstates $e_{i_1},\ldots,e_{i_{n_k}}$. 
Let $B$ be a complete self-adjoint operator as before.
Then 
\begin{equation}
	\int\left|\peetje_{A}(a_k|\lambda)-\frac{n_k}{d}\right|\dee\mu_{\psi_d}(\lambda)
	\leq
	\int\left|\peetje_{A}(a_k|\lambda)-\sum_{j=1}^{n_k}\peetje_{B}(b_{i_j}|\lambda)\right|\dee\mu_{\psi_d}(\lambda)
	+
	\int\left|\sum_{j=1}^{n_k}\left(\peetje_{B}(b_{i_j}|\lambda)-\frac{1}{d}\right)\right|\dee\mu_{\psi_d}(\lambda).
\end{equation}
The first term equals zero because of the perfect correlations between the outcomes and that the second term equals zero follows by applying \eqref{equibasic} with $A$ replaced by $B$.
\end{proof}

\section{Proof of the extended equiprobability theorem}\label{appendix2}

\begin{theorem}
Consider the quantum system of a pair of $d$-level systems and the entangled state
\begin{equation}
	\psi_d^S\ldef\sum_{i=1}^{d}c_ie_i\otimes e_i\in\h_A\otimes\h_B\simeq\mathbb{C}^d\otimes\mathbb{C}^d.
\end{equation}
In a complete ontic model that satisfies ancilla independence and parameter independence, every ontic model for the system pair must be trivial with respect to all preparations that are represented by $\psi_d^S$ and all local measurements whose corresponding operator is diagonal in the orthonormal basis $(e_i)_{i=1}^d$.
\end{theorem}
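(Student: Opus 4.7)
The plan is to reduce the partially entangled case to \autoref{equipthm} by coupling the pair to an auxiliary pair via ancilla independence, and by postponing all limiting operations to the final step. First, I would handle the correlational content by imitating the final step of the proof in \autoref{EqProbSec}: parameter independence combined with the perfect correlations $\langle\psi_d^S|[e_i]\otimes[e_j]|\psi_d^S\rangle = \delta_{ij}|c_i|^2$ bounds $\int|\peetje_{M_{A\otimes\een}}(a_i|\lambda) - \peetje_{M_{\een\otimes B}}(b_i|\lambda)|\,\dee\mu_{\psi_d^S}(\lambda)$ by a sum of off-diagonal terms of the form $\langle\psi_d^S|[e_i]\otimes[e_j]|\psi_d^S\rangle$ with $i\neq j$, all of which vanish. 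So Alice's and Bob's response functions in the $(e_i)$ basis agree $\mu_{\psi_d^S}$-almost surely, reducing the task to identifying their common value with the quantum probability $|c_i|^2$.

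To pin down the value I would approximate $|c_i|^2$ by rationals $n_i/D$ with $\sum_i n_i = D$, and use ancilla independence to couple the pair to a second pair of $D$-level systems prepared in a maximally entangled state. A suitable transformation $\gamma_U$ on the combined system puts it into a pair state that is (approximately) maximally entangled of rank $D$ on a subspace of the extended Hilbert space, with each coarse Alice-outcome $a_i$ identified with exactly $n_i$ fine outcomes in an equiprobability-ready basis. \autoref{equipthm} then yields $\lambda$-probability $1/D$ for each fine outcome almost surely; summing the $n_i$ fine outcomes in the group corresponding to $a_i$ gives $\lambda$-probability $n_i/D$ for the coarse outcome. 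Ancilla independence \eqref{AnApp} propagates the resulting bound back to the original pair, so that $\int|\peetje_{M_A}(a_i|\lambda) - |c_i|^2|\,\dee\mu_{\psi_d^S}(\lambda)$ is controlled by the combined error from the coupling, the rational approximation, and the chained Bell inequalities internal to \autoref{equipthm}.

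The main obstacle is exactly the point flagged in the discussion preceding the theorem: the three limiting procedures -- the chained-Bell-inequality parameter $N$ inside \autoref{equipthm}, the $D\to\infty$ rational approximation of $|c_i|^2$, and the finite-$D$ approximation involved in producing the maximally entangled target -- are intertwined, and each is individually delicate. The technical content of the proof must be to package all three errors into a single explicit estimate $\int|\peetje_{M_A}(a_i|\lambda) - |c_i|^2|\,\dee\mu_{\psi_d^S}(\lambda) \leq \varepsilon(N,D)$ whose right-hand side vanishes only once $N$ and $D$ are taken to infinity jointly, so that no almost-sure statement has to be transferred across $\gamma_U$ at an intermediate stage and no limit has to be exchanged with another. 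This is precisely the step at which the present proof is meant to avoid the gaps noted by Landsman in the earlier attempts of Colbeck-Renner and Leegwater.
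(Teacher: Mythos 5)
Your architecture matches the paper's in outline (rational approximation of $\left|c_i\right|^2$, coupling to an auxiliary pair via ancilla independence, a chained-Bell argument, and a single combined error estimate at the end), and you correctly identify that the three limits must be packaged into one explicit bound. But there are two genuine gaps in the middle step. First, you prepare the auxiliary $D$-level pair in a \emph{maximally entangled} state and then invoke ``a suitable transformation $\gamma_U$'' to bring the combined system close to a rank-$D$ maximally entangled state. If that transformation is to be compatible with parameter independence it must factor into local unitaries on Alice's and Bob's sides; but local unitaries preserve the Schmidt spectrum, so from $\psi_{\mathrm{ME}}^{(D)}\otimes\psi_d^S$ you can never get close to a maximally entangled state of the combined system. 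The paper instead prepares the ancilla pair in the van Dam--Hayden embezzling state $C_N\sum_{k}k^{-1/2}g_k\otimes g_k$, whose special spectrum is what makes an \emph{approximate} local conversion possible; this is not an optional implementation detail but the reason the construction works at all. Second, if your $\gamma_U$ is instead a genuine dynamical transformation of the ontic state (global or not), then relating the post-transformation $\lambda$-probabilities back to the original response function $\peetje_{A\otimes\een}$ evaluated at the pre-transformation ontic state is exactly the unitary-invariance assumption that the paper isolates in \autoref{critsec} as the unjustified step in Colbeck--Renner and Leegwater. Ancilla independence \eqref{AnApp} only covers the appending kernel $\gamma_{P_2}$, not a unitary kernel, so it cannot ``propagate the bound back'' across $\gamma_U$. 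The paper's proof avoids this entirely by never transforming the ontic state: the local embezzling unitaries $U,V$ and the rotation unitaries $U_\theta,V_\phi$ are absorbed into the \emph{definitions} of the measured operators $\overline{A}_\theta,\overline{B}_\phi$, which are then ordinary local measurements on the untransformed preparation $\Psi_{\mathrm{I}}$.

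Relatedly, your middle paragraph applies \autoref{equipthm} as a black box to a state that is only approximately maximally entangled, which the theorem as stated does not license --- this is precisely Landsman's continuity objection that the paper discusses. Your closing paragraph correctly says the cure is a single explicit estimate, but the proposal never produces it; the paper does so by rerunning the chained-Bell computation directly on $\Psi_{\mathrm{F}}=UV\Psi_{\mathrm{I}}$ with explicit correlation defects (its \eqref{TediousExpr}), summing them, and only then choosing $L$, $\epsilon$ and $N$ in that order. As written, your argument still contains the gap it diagnoses.
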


\begin{proof}
\item
\paragraph{Setup}
The main strategy of the proof is to reshape the situation such that the proof strategy of the equiprobability theorem can be adopted.
There are then two entwined issues that need to be faced.

First, the proof of \autoref{equipthm} relies on the state being maximally entangled.
To solve this issue, one uses the method of embezzlement \citep{DamHayden03}.
The system is coupled to a second pair of systems prepared in a special entangled state.
Then, by applying only local unitary operations, one can transform the initial state to a state that is arbitrarily close to any desired entangled state.

The second problem is that the equiprobability theorem yields equal probabilities, whereas now arbitrary Born probabilities with values $\left|c_i\right|^2$ are to be derived.
This is solved by approximating these probabilities with sums of equal probabilities, which now pertain to yet another pair of systems coupled to the initial pair.

Without loss of generality, it may be assumed that the basis vectors $e_i$ are such that all the $c_i$ are positive reals.
Now let $M_A$ be a complete local measurement with corresponding operator $A\otimes \een$ and $A=\sum_{i=1}^da_i[e_i]$.
The aim is to show that for any $\Epsilon>0$
\begin{equation}\label{ToShow}
	\int\left|\peetje_{A}(a_i|\lambda)-c_i^2\right|\dee\mu_{\psi_d^S}(\lambda)<\Epsilon
\end{equation}
for all $i=1,\ldots,d$.

The first step is to approximate the Born probabilities with rational numbers.
For given $\epsilon>0$, choose natural numbers $n^\epsilon_1,\ldots,n^\epsilon_d$ such that\footnote{How $\epsilon$ is to relate to $\Epsilon$ is made clear at the end of the proof.}
\begin{equation}\label{RationalEst}
	\frac{c_i^2d^\epsilon}{n_i^\epsilon}\in(1-\epsilon,1+\epsilon),~\text{with }d^\epsilon\ldef\sum_{i=1}^dn_i^\epsilon,
\end{equation}
for all $i=1,\ldots,d$.
The system is then coupled to a pair of systems that is big enough to harbor this approximation.
The Hilbert spaces $\h_A^\epsilon,\h_B^\epsilon$ for these systems are copies of $\mathbb{C}^{m^\epsilon}$ with $m^\epsilon\ldef\max_{i}n_i^\epsilon$ and for each an orthonormal basis $f_1,f_2,\ldots,f_{m^\epsilon}$ is fixed.
So the idea is that because $\sum_{j=1}^{n^\epsilon_i}\tfrac{1}{d^\epsilon}\approx c_i^2$, the initial state $\psi_d^S\in\h_A\otimes\h_B$ can be mimicked by the maximally entangled state
\begin{equation}
	\psi_{d^\epsilon}\ldef\sum_{i=1}^d\sum_{j=1}^{n^\epsilon_i}\frac{1}{\sqrt{d^\epsilon}}f_{j}\otimes f_{j}\otimes e_i\otimes e_i,
\end{equation}
which is an element of $\h_A^\epsilon\otimes\h_B^\epsilon\otimes\h_A\otimes\h_B$.

For any results for the extended system to be relevant for the initial system, it has to evolve towards a state close to $\psi_{d^\epsilon}$.
This is accomplished by an interaction with a third pair of systems with corresponding Hilbert spaces $\h_A^N,\h_B^N$ that are copies of $\mathbb{C}^{D}$ where
\begin{equation}
	D\ldef N\prod_{i=1}^dn^\epsilon_i
\end{equation}
and $N$ is a natural number that in the end will be taken to be large enough to have the right relation to $\Epsilon$.

For both spaces an orthonormal basis $g_1,\ldots,g_{D}$ is fixed and it is assumed that the pair is initially prepared in the state
\begin{equation}
	\psi_{D}\ldef C_N\sum_{k=1}^{D}\frac{1}{\sqrt{k}}g_k\otimes g_k\in\h_A^N\otimes\h_B^N,
\end{equation}
where $C_N$ is a normalization constant.
The initial state for the totality of six systems is then taken to be
\begin{equation}\label{PsiI}
	\Psi_{\mathrm{I}}\ldef
	C_N\sum_{k=1}^{D}\sum_{i=1}^d\frac{c_i}{\sqrt{k}}
	g_k\otimes g_k\otimes f_{1}\otimes f_{1}\otimes e_i\otimes e_i.
\end{equation}
Note  that although $\Psi_{\mathrm{I}}$ depends on the parameters $\epsilon,N$, this is suppressed in the notation, as these parameters are taken to be fixed for the larger part of the proof.

Throughout, the bases will be fixed and a considerable chunk of the proof reduces to manipulations of the indices of these basis vectors.
Therefore it is useful to adopt a notation that highlights the indices:
\begin{equation}
	\bket{k}{\tilde{k}}{j}{\tilde{\jmath}}{i}{\tilde{\imath}}\ldef
	g_{k}\otimes g_{\tilde{k}}\otimes f_{j}\otimes f_{\tilde{\jmath}}\otimes e_{i}\otimes e_{\tilde{\imath}}.
\end{equation}
So \eqref{PsiI} becomes
\begin{equation}
	\Psi_{\mathrm{I}}=
	C_N\sum_{k=1}^{D}\sum_{i=1}^d\frac{c_i}{\sqrt{k}}
	\bket{k}{k}{1}{1}{i}{i}.
\end{equation}
In a similar fashion a more compact notation for products of projection operators is also adopted:
\begin{equation}
	\proj{P_1}{P_2}{P_3}{P_4}{P_5}{P_6}
	\ldef
	P_1\otimes P_2\otimes P_3\otimes P_4\otimes P_5\otimes P_6.
\end{equation}

\paragraph{Coupling the systems}
The first step is to use the fact that the ontic models are part of a complete ontic model.
So for every $N,\epsilon$ there exists a Markov kernel $\gamma_{N,\epsilon}$ that translates the probability measure $\mu_{\psi_d^S}$ for the initial ontic model to a probability measure 
\begin{equation}
	\mu_{\Psi_{\mathrm{I}}}(\Delta)\ldef\int\gamma_{N,\epsilon}(\Delta|\lambda)\dee\mu_{\psi_d^S}(\lambda)
\end{equation}
for the six systems, that corresponds to a preparation of the state $\Psi_{\mathrm{I}}$.
Moreover, because of ancilla independence, the initial response function $\peetje_A$ should correspond to some response function $\tilde{\peetje}_A$ in the larger ontic model such that
\begin{equation}
	\peetje_A(a_i|\lambda)=\int\tilde{\peetje}_A(a_i|\tilde{\lambda})\gamma_{N,\epsilon}(\dee\tilde{\lambda}|\lambda).
\end{equation}
One then has the estimate
\begin{equation}\label{CouplingStep}
\int\left|\peetje_{A}(a_i|\lambda)-c_i^2\right|\dee\mu_{\psi_d^S}(\lambda)
	=
	\int\left|\int\tilde{\peetje}_A(a_i|\tilde{\lambda})\gamma_{N,\epsilon}(\dee\tilde{\lambda}|\lambda)-c_i^2\right|\dee\mu_{\psi_d^S}(\lambda)
	\leq
	\int\left|\tilde{\peetje}_{A}(a_i|\tilde{\lambda})-c_i^2\right|\dee\mu_{\psi_{\mathrm{I}}}(\tilde{\lambda}).
\end{equation}
So the remainder of the proof focuses entirely on the large system.

\paragraph{Local measurements on the total system}
The main estimate is obtained by chaining Bell inequalities for local measurements on the total system.
The local measurements needed for the proof, however, can be conveniently defined in terms of the bases for $\h_A^\epsilon,\h_B^\epsilon,\h_A,\h_B$ alone.
Consider measurements $M_{\overline{A}},M_{\overline{B}}$ corresponding to the self-adjoint operators
\begin{equation}
\begin{aligned}
	\overline{A}
	\ldef&
	\sum_{i=1}^d\sum_{j=1}^{n_i^\epsilon}a_{i,j}
	U^*P^A_{i,j}U,~\text{where}~
	P^A_{i,j}
	\ldef\proj{\een}{\een}{[f_{j}]}{\een}{[e_i]}{\een},\\
	\overline{B}
	\ldef&
	\sum_{i=1}^d\sum_{j=1}^{n_i^\epsilon}b_{i,j}
	V^*P^B_{i,j}V,~\text{where}~
	P^B_{i,j}
	\ldef\proj{\een}{\een}{\een}{[f_{j}]}{\een}{[e_i]}.
\end{aligned}
\end{equation}
Here $U$ and $V$ are local unitary operators, i.e., $U$ acts as the unit on the Hilbert spaces $\h_B^N,\h_B^\epsilon,\h_B$ and  $V$ acts as the unit on the Hilbert spaces $\h_A^N,\h_A^\epsilon,\h_A$.
They are further assumed to have the following action on specific basis vectors:\footnote{Here $\ceil*{x}$ denotes the smallest natural number $n$ such that $n\geq x$.}
\begin{equation}
	U\bket{k}{k}{1}{1}{i}{i}
	=
	\bket{\ceil*{\tfrac{k}{n_i^\epsilon}}}{k}{j_{i,k}}{1}{i}{i},~
	V\bket{k}{k}{1}{1}{i}{i}
	=
	\bket{k}{\ceil*{\tfrac{k}{n_i^\epsilon}}}{1}{j_{i,k}}{i}{i},
\end{equation}
where
\begin{equation}
	j_{i,k}\ldef k-n_i^\epsilon\left(\ceil*{\tfrac{k}{n_i^\epsilon}}-1\right)
\end{equation}
takes on values in the range $\{1,2,\ldots,n_i^\epsilon\}$.
It is further convenient to introduce the state
\begin{equation}
	\Psi_{\mathrm{F}}
	\ldef UV\Psi_{\mathrm{I}}
	=
	C_N\sum_{k=1}^{D}\sum_{i=1}^d\frac{c_i}{\sqrt{k}}
	\bket{\ceil*{\tfrac{k}{n_i^\epsilon}}}{\ceil*{\tfrac{k}{n_i^\epsilon}}}{j_{i,k}}{j_{i,k}}{i}{i}.
\end{equation}
The theory of embezzlement states that this state becomes a better approximation of $\psi_D\otimes\psi_{d^\epsilon}$ as $N$ becomes larger and $\epsilon$ becomes smaller.
This is, on a formal level, what makes the proof work.

The next step is to make a further estimate picking up again from \eqref{CouplingStep}:
\begin{equation}\label{MainSplit}
\begin{split}
	\int\left|\tilde{\peetje}_{A}(a_i|\lambda)-c_i^2\right|\dee\mu_{\Psi_{\mathrm{I}}}(\lambda)
	\leq&
	\int\left|\tilde{\peetje}_{A}(a_i|\lambda)-\sum_{j=1}^{n_i^\epsilon}\peetje_{\overline{B}}(b_{i,j}|\lambda)\right|\dee\mu_{\Psi_{\mathrm{I}}}(\lambda)\\
	&+
	\int\left|\sum_{j=1}^{n_i^\epsilon}\peetje_{\overline{B}}(b_{i,j}|\lambda)-\frac{n_i^\epsilon}{d^\epsilon}\right|\dee\mu_{\Psi_{\mathrm{I}}}(\lambda)
	+
	\int\left|\frac{n_i^\epsilon}{d^\epsilon}-c_i^2\right|\dee\mu_{\Psi_{\mathrm{I}}}(\lambda).
\end{split}
\end{equation}
By \eqref{RationalEst} the final term is smaller than $\epsilon$.
The first term will be shown to equal zero.
Making use of parameter independence and the fact that $V$ is a local operator, the standard estimate gives
\begin{multline}
	\int\left|\tilde{\peetje}_{A}(a_i|\lambda)-\sum_{j=1}^{n^\epsilon_i}\peetje_{\overline{B}}(b_{i,j}|\lambda)\right|\dee\mu_{\Psi_{\mathrm{I}}}(\lambda)\\
	\begin{aligned}
		\leq&
		\sum_{\tilde{\imath}\neq i}\sum_{j=1}^{n_{\tilde{\imath}}^\epsilon}
		\int\peetje_{A,\overline{B}}(a_i,b_{\tilde{\imath},j}|\lambda)\dee\mu_{\Psi_{\mathrm{I}}}(\lambda)
		+
		\sum_{\tilde{\imath}'\neq i}\sum_{j=1}^{n_i^\epsilon}
		\int\peetje_{A,\overline{B}}(a_{\tilde{\imath}'},b_{i,j}|\lambda)\dee\mu_{\Psi_{\mathrm{I}}}(\lambda)\\
		=&
		\sum_{\tilde{\imath}\neq i}\sum_{j=1}^{n_{\tilde{\imath}}^\epsilon}
		\expv{V\Psi_{\mathrm{I}}}%
		{\proj{\een}{\een}{\een}{[f_{j}]}{[e_i]}{[e_{\tilde{\imath}}]}}%
		{V\Psi_{\mathrm{I}}}
		+
		\sum_{\tilde{\imath}'\neq i}\sum_{j=1}^{n_i^\epsilon}
		\expv{V\Psi_{\mathrm{I}}}%
		{\proj{\een}{\een}{\een}{[f_{j}]}{[e_{\tilde{\imath}'}]}{[e_i]}}%
		{V\Psi_{\mathrm{I}}}.
	\end{aligned}
\end{multline}
Then each term in the final sum equals zero:
\begin{multline}
	\expv{V\Psi_{\mathrm{I}}}%
	{\proj{\een}{\een}{\een}{[f_l]}{[e_m]}{[e_n]}}%
	{V\Psi_{\mathrm{I}}}\\
	\begin{aligned}
		=&
		C_N^2
		\sum_{k,\tilde{k}=1}^{D}\sum_{i,\tilde{\imath}=1}^d
		\sum_{j=1}^{n_i^\epsilon}\sum_{\tilde{\jmath}=1}^{n_{\tilde{\imath}}^\epsilon}
		\frac{c_ic_{\tilde{\imath}}}{\sqrt{k\tilde{k}}}
		\braket{g_k}{g_{\tilde{k}}}\braket{g_{\ceil{\frac{k}{n_i^\epsilon}}}}{g_{\ceil{\frac{\tilde{k}}{n_{\tilde{\imath}}^\epsilon}}}}
		\braket{f_1}{f_1}
		\expv{f_{j_{i,k}}}{[f_l]}{f_{j_{\tilde{\imath},\tilde{k}}}}
		\expv{e_i}{[e_m]}{e_{\tilde{\imath}}}\expv{e_i}{[e_n]}{e_{\tilde{\imath}}}\\
		=&
		0,
	\end{aligned}
\end{multline}
where the final equality follows because $m\neq n$.

The conclusion of this step is then that
\begin{equation}\label{IntermedResult}
	\int\left|\tilde{\peetje}_{A}(a_i|\lambda)-c_i^2\right|\dee\mu_{\Psi_{\mathrm{I}}}(\lambda)
	<
	\int\left|\sum_{j=1}^{n_i^\epsilon}\peetje_{\overline{B}}(b_{i,j}|\lambda)-\frac{n_i^\epsilon}{d^\epsilon}\right|\dee\mu_{\Psi_{\mathrm{I}}}(\lambda)
	+\epsilon.
\end{equation}

\paragraph{Applying chained Bell inequalities}
The next step is to find an estimate for the second term on the right hand side of \eqref{MainSplit}.
This is where the strategy for the proof of \autoref{equipthm} is copied.
First note that
\begin{equation}\label{Chainz}
\begin{split}
	\int\left|\sum_{j=1}^{n_i^\epsilon}\peetje_{\overline{B}}(b_{i,j}|\lambda)-\frac{n_i^\epsilon}{d^\epsilon}\right|\dee\mu_{\Psi_{\mathrm{I}}}(\lambda)
	\leq&
	\sum_{j=1}^{n_i^\epsilon}\int\left|\peetje_{\overline{B}}(b_{i,j}|\lambda)-\frac{1}{d^\epsilon}\right|\dee\mu_{\Psi_{\mathrm{I}}}(\lambda)\\
	\leq&
	\frac{1}{d^\epsilon}
	\sum_{j=1}^{n_i^\epsilon}\sum_{\tilde{\imath}=1}^{d}\sum_{\tilde{\jmath}=1}^{n_{\tilde{\imath}}^\epsilon}
	\int\left|\peetje_{\overline{B}}(b_{i,j}|\lambda)-\peetje_{\overline{B}}(b_{\tilde{\imath},\tilde{\jmath}}|\lambda)\right|\dee\mu_{\Psi_{\mathrm{I}}}(\lambda).
\end{split}
\end{equation}

Next, for fixed values of $i,j,\tilde{\imath},\tilde{\jmath}$ and for $\theta\in[0,\pi/2]$, two local unitary operators $U_\theta$ and $V_\theta$ are defined with the following actions:\footnote{For notational convenience the variables $i,j,\tilde{\imath},\tilde{\jmath}$ are suppressed in the expression for these unitary operators.}
\begin{equation}
\begin{aligned}
	U_\theta\bket{k_1}{k_2}{j}{j_2}{i}{i_2}
	&=
	\cos\theta\bket{k_1}{k_2}{j}{j_2}{i}{i_2}
	+\sin\theta\bket{k_1}{k_2}{\tilde{\jmath}}{j_2}{\tilde{\imath}}{i_2},\\
	U_\theta\bket{k_1}{k_2}{\tilde{\jmath}}{j_2}{\tilde{\imath}}{i_2}
	&=
	\cos\theta\bket{k_1}{k_2}{\tilde{\jmath}}{j_2}{\tilde{\imath}}{i_2}
	-\sin\theta\bket{k_1}{k_2}{j}{j_2}{i}{i_2}
\end{aligned}
\end{equation}
and
\begin{equation}
\begin{aligned}
	V_\theta\bket{k_1}{k_2}{j_1}{j}{i_1}{i}
	&=
	\cos\theta\bket{k_1}{k_2}{j_1}{j}{i_1}{i}
	+\sin\theta\bket{k_1}{k_2}{j_1}{\tilde{\jmath}}{i_1}{\tilde{\imath}},\\
	V_\theta\bket{k_1}{k_2}{j_1}{\tilde{\jmath}}{i_1}{\tilde{\imath}}
	&=
	\cos\theta\bket{k_1}{k_2}{j_1}{\tilde{\jmath}}{i_1}{\tilde{\imath}}
	-\sin\theta\bket{k_1}{k_2}{j_1}{j}{i_1}{i},
\end{aligned}
\end{equation}
for arbitrary values of $k_1,k_2,j_1,j_2,i_1,i_2$ and $U_\theta$ and $V_\theta$ act as the unit operator on all other basis vectors.

For given $i,j,\tilde{\imath},\tilde{\jmath},\theta$, two local measurements $M_{\overline{A}_\theta}$ and $M_{\overline{B}_\theta}$ are introduced corresponding to the operators 
\begin{equation}
	\overline{A}_\theta
	\ldef
	\sum_{\iota=1}^d\sum_{\eta=1}^{n_\iota^\epsilon}a_{\iota,\eta}
	U^*U_\theta^*P^A_{\iota,\eta}U_\theta U,~
	\overline{B}_\theta
	\ldef
	\sum_{\iota=1}^d\sum_{\eta=1}^{n_\iota^\epsilon}b_{\iota,\eta}
	V^*V_\theta^*P^B_{\iota,\eta}V_\theta V.
\end{equation}
Without loss of generality, it may be assumed that $M_{\overline{B}_0}$ and $M_{\overline{B}}$ refer to the same measurement.

For any $L\in\mathbb{N}$ the angle from 0 to $\tfrac{\pi}{2}$ can be divided into $2L+1$ equal smaller angles so as to obtain the inequality
\begin{equation}\label{MoreChains}
\begin{split}
	\int\left|\peetje_{\overline{B}}(b_{i,j}|\lambda)-\peetje_{\overline{B}}(b_{\tilde{\imath},\tilde{\jmath}}|\lambda)\right|\dee\mu_{\Psi_\mathrm{I}}(\lambda)
	\leq&
	\sum_{l=0}^L
	\int\left|\peetje_{\overline{B}_{L,l}}(b_{i,j}|\lambda)
	-\peetje_{\overline{A}_{L,l}}(a_{i,j}|\lambda)\right|\dee\mu_{\Psi_\mathrm{I}}(\lambda)\\
	&+
	\sum_{l=0}^{L-1}
	\int\left|\peetje_{\overline{A}_{L,l}}(a_{i,j}|\lambda)
	-\peetje_{\overline{B}_{L,l+1}}(b_{i,j}|\lambda)\right|\dee\mu_{\Psi_\mathrm{I}}(\lambda)\\
	&+
	\int\left|\peetje_{\overline{A}_{L,L}}(a_{i,j}|\lambda)
	-\peetje_{\overline{B}}(b_{\tilde{\imath},\tilde{\jmath}}|\lambda)\right|\dee\mu_{\Psi_\mathrm{I}}(\lambda),
\end{split}
\end{equation}
where
\begin{equation}
	\overline{B}_{L,l}\ldef\overline{B}_{\frac{2l\pi}{(2L+1)2}}
	\text{ and }
	\overline{A}_{L,l}\ldef\overline{A}_{\frac{(2l+1)\pi}{(2L+1)2}}.
\end{equation}
Evaluating each of these terms is somewhat tedious.
Therefore the calculations to obtain the following results are transported to the next section.
For the final term one simply has
\begin{equation}\label{ZeroEq}
	\int\left|\peetje_{\overline{A}_{\frac{\pi}{2}}}(a_{i,j}|\lambda)
		-\peetje_{\overline{B}}(b_{\tilde{\imath},\tilde{\jmath}}|\lambda)\right|\dee\mu_{\Psi_\mathrm{I}}(\lambda)=0.
\end{equation}
For the other terms one obtains the expression
\begin{equation}\label{TediousExpr}
\begin{split}
	\int\left|\peetje_{\overline{A}_\theta}(a_{i,j}|\lambda)
	-\peetje_{\overline{B}_\phi}(b_{i,j}|\lambda)\right|\dee\mu_{\Psi_\mathrm{I}}(\lambda)
	\leq&
	\frac{C_N^2}{2}\sum_{m=1}^{\frac{D}{n^\epsilon_i}}
	\left(\frac{c_i^2(2\sin^2(\theta-\phi)+\sin2\theta\sin2\phi)}{k_{m,i,j}} 
	-
	\frac{c_ic_{\tilde{\imath}}\sin2\theta\sin2\phi}{\sqrt{k_{m,i,j}k_{m,\tilde{\imath},\tilde{\jmath}}}}
	\right)\\
	&+
	\frac{C_N^2}{2}\sum_{m=1}^{\frac{D}{n^\epsilon_{\tilde{\imath}}}}
	\left(
	\frac{c_{\tilde{\imath}}^2(2\sin^2(\theta-\phi)+\sin2\theta\sin2\phi)}{k_{m,\tilde{\imath},\tilde{\jmath}}}
	-
	\frac{c_ic_{\tilde{\imath}}\sin2\theta\sin2\phi}{\sqrt{k_{m,i,j}k_{m,\tilde{\imath},\tilde{\jmath}}}}
	\right),
\end{split}
\end{equation}
where
\begin{equation}
	k_{m,i,j}\ldef j+(m-1)n_i^\epsilon.
\end{equation}

\paragraph{Obtaining the final estimate}
The sum over the angles in \eqref{MoreChains} is now postponed to first perform the sums over $j$ and $\tilde{\jmath}$ from \eqref{Chainz}.
For two of the four terms in \eqref{TediousExpr} this summation easily yields a nice result:
\begin{equation}
	\sum_{j=1}^{n^\epsilon_i}\sum_{\tilde{\jmath}=1}^{n^\epsilon_{\tilde{\imath}}}\sum_{m=1}^{\frac{D}{n_i^\epsilon}}\frac{1}{k_{m,i,j}}
	=
	\sum_{\tilde{\jmath}=1}^{n^\epsilon_{\tilde{\imath}}}\sum_{k=1}^{D}\frac{1}{k}
	=\frac{n^\epsilon_{\tilde{\imath}}}{C_N^2},~
	\sum_{j=1}^{n^\epsilon_i}\sum_{\tilde{\jmath}=1}^{n^\epsilon_{\tilde{\imath}}}\sum_{m=1}^{\frac{D}{n_{\tilde{\imath}}^\epsilon}}\frac{1}{k_{m,\tilde{\imath},\tilde{\jmath}}}
	=
	\sum_{j=1}^{n^\epsilon_{i}}\sum_{k=1}^{D}\frac{1}{k}
	=\frac{n^\epsilon_{i}}{C_N^2}.
\end{equation}
For the other two terms one has the estimate
\begin{equation}
\begin{split}
	\sum_{j=1}^{n^\epsilon_i}\sum_{\tilde{\jmath}=1}^{n^\epsilon_{\tilde{\imath}}}\sum_{m=1}^{\frac{D}{n_i^\epsilon}}
	\frac{1}{\sqrt{k_{m,i,j}k_{m,\tilde{\imath},\tilde{\jmath}}}}
	\geq&
	\sum_{j=1}^{n^\epsilon_i}\sum_{\tilde{\jmath}=1}^{n^\epsilon_{\tilde{\imath}}}\sum_{m=1}^{\frac{D}{n_i^\epsilon}}
	\frac{1}{\sqrt{mn^\epsilon_imn^\epsilon_{\tilde{\imath}}}}
	=
	\sqrt{n^\epsilon_in^\epsilon_{\tilde{\imath}}}\sum_{m=1}^{\frac{D}{n_i^\epsilon}}\frac{1}{m}
	=
	\sqrt{n^\epsilon_in^\epsilon_{\tilde{\imath}}}\left(\frac{1}{C_N^2}-\sum_{m=\frac{D}{n_i^\epsilon}+1}^D\frac{1}{m}\right)\\
	\geq&
	\sqrt{n^\epsilon_in^\epsilon_{\tilde{\imath}}}\left(\frac{1}{C_N^2}-\int_{\frac{D}{n_i^\epsilon}}^D\frac{1}{x}\dee x\right)
	=
	\sqrt{n^\epsilon_in^\epsilon_{\tilde{\imath}}}\left(\frac{1}{C_N^2}-\log\left(n^\epsilon_i\right)\right)
\end{split}
\end{equation}
and similarly
\begin{equation}
	\sum_{j=1}^{n^\epsilon_i}\sum_{\tilde{\jmath}=1}^{n^\epsilon_{\tilde{\imath}}}\sum_{m=1}^{\frac{D}{n_{\tilde{\imath}}^\epsilon}}
	\frac{1}{\sqrt{k_{m,i,j}k_{m,\tilde{\imath},\tilde{\jmath}}}}
	\geq
	\sqrt{n^\epsilon_in^\epsilon_{\tilde{\imath}}}\left(\frac{1}{C_N^2}-\log\left(n^\epsilon_{\tilde{\imath}}\right)\right).
\end{equation}

Combining these estimates with the expression \eqref{TediousExpr} and making use of \eqref{RationalEst} gives
\begin{multline}
	\sum_{j=1}^{n^\epsilon_i}\sum_{\tilde{\jmath}=1}^{n^\epsilon_{\tilde{\imath}}}
	\int\left|\peetje_{\overline{A}_\theta}(a_{i,j}|\lambda)
	-\peetje_{\overline{B}_\phi}(b_{i,j}|\lambda)\right|\dee\mu_{\Psi_I^\epsilon}(\lambda)\\
	\begin{aligned}
	\leq&
	\frac{c_i^2n^\epsilon_{\tilde{\imath}}}{2}(2\sin^2(\theta-\phi)+\sin2\theta\sin2\phi)
	-
	\frac{c_ic_{\tilde{\imath}}\sqrt{n^\epsilon_in^\epsilon_{\tilde{\imath}}}}{2}\left(1-C_N^2\log\left(n^\epsilon_i\right)\right)
	\sin2\theta\sin2\phi\\
	&+
	\frac{c_{\tilde{\imath}}^2n^\epsilon_i}{2}(2\sin^2(\theta-\phi)+\sin2\theta\sin2\phi)
	-
	\frac{c_ic_{\tilde{\imath}}\sqrt{n^\epsilon_in^\epsilon_{\tilde{\imath}}}}{2}\left(1-C_N^2\log\left(n^\epsilon_{\tilde{\imath}}\right)\right)
	\sin2\theta\sin2\phi\\
	<&
	\frac{n^\epsilon_in^\epsilon_{\tilde{\imath}}(1+\epsilon)}{d^\epsilon}\left(2\sin^2(\theta-\phi)+\sin2\theta\sin2\phi\right)
	-
	\frac{n^\epsilon_in^\epsilon_{\tilde{\imath}}(1-\epsilon)}{d^\epsilon}\left(1-\tfrac{1}{2}C_N^2\log(n^\epsilon_in^\epsilon_{\tilde{\imath}})\right)
	\sin2\theta\sin2\phi\\
	<&
	2n^\epsilon_{\tilde{\imath}}\sin^2(\theta-\phi)+2n^\epsilon_{\tilde{\imath}}\epsilon+n^\epsilon_{\tilde{\imath}}C_N^2\log(d^\epsilon).
	\end{aligned}
\end{multline} 

Finally, combining this result with \eqref{IntermedResult}, \eqref{Chainz} and \eqref{MoreChains}, one obtains
\begin{equation}\label{FinalFinal}
\begin{split}
	\int\left|
	\tilde{\peetje}_A(a_i|\lambda)-c_i^2
	\right|\dee\mu_{\psi_\mathrm{I}}(\lambda)
	<&
	\epsilon+
	\frac{1}{d^\epsilon}\sum_{\tilde{\imath} =1}^d(2L+1)n^\epsilon_{\tilde{\imath}}\left(
	2\sin^2\left(\tfrac{\pi}{2(2L+1)}\right)+2\epsilon+C_N^2\log(d^\epsilon)
	\right)\\
	<&
	\frac{\pi^2}{2(2L+1)}
	+
	(4L+3)\epsilon
	+
	\frac{(2L+1)\log(d^\epsilon)}{\log(1+N\prod_{i=1}^dn^\epsilon_i)}.
\end{split}
\end{equation}
Now choose $L$ such that $\tfrac{\pi^2}{2(2L+1)}<\tfrac{1}{3}\Epsilon$, then choose $\epsilon$ such that $(4L+3)\epsilon<\tfrac{1}{3}\Epsilon$, and finally choose $N$ such that $\tfrac{(2L+1)\log(d^\epsilon)}{\log(1+N\prod_{i=1}^dn^\epsilon_i)}<\tfrac{1}{3}\Epsilon$.
Plugging these choices into \eqref{FinalFinal} one obtains \eqref{ToShow}.

\end{proof}

\section{Some remaining calculations}
\paragraph{Proof for \eqref{ZeroEq}}
The equality follows with the use of parameter independence to get an estimate in terms of quantum probabilities.
\begin{multline}
	\int\left|\peetje_{\overline{A}_{\frac{\pi}{2}}}(a_{i,j}|\lambda)
		-\peetje_{\overline{B}}(b_{\tilde{\imath},\tilde{\jmath}}|\lambda)\right|\dee\mu_{\Psi_\mathrm{I}}(\lambda)\\
		\begin{aligned}
		\leq&
		\sum_{(r,s)\neq(\tilde{\imath},\tilde{\jmath})}\int\peetje_{\overline{A}_{\frac{\pi}{2}}\overline{B}}(a_{i,j}b_{r,s}|\lambda)\dee\mu_{\Psi_\mathrm{I}}(\lambda)
		+
		\sum_{(r,s)\neq(i,j)}\int\peetje_{\overline{A}_{\frac{\pi}{2}}\overline{B}}(a_{r,s}b_{\tilde{\imath},\tilde{\jmath}}|\lambda)\dee\mu_{\Psi_\mathrm{I}}(\lambda)\\
		=&
		\sum_{(r,s)\neq(\tilde{\imath},\tilde{\jmath})}\expv{\Psi_\mathrm{F}}{U_{\frac{\pi}{2}}^*P^A_{i,j}U_{\frac{\pi}{2}}P^B_{r,s}}{\Psi_\mathrm{F}}
		+
		\sum_{(r,s)\neq(i,j)}\expv{\Psi_\mathrm{F}}{U_{\frac{\pi}{2}}^*P^A_{r,s}U_{\frac{\pi}{2}}P^B_{\tilde{\imath},\tilde{\jmath}}}{\Psi_\mathrm{F}}\\
		=&
		\sum_{(r,s)\neq(\tilde{\imath},\tilde{\jmath})}\expv{\Psi_\mathrm{F}}{P^A_{\tilde{\imath},\tilde{\jmath}}P^B_{r,s}}{\Psi_\mathrm{F}}
		+
		\sum_{(r,s)\neq(\tilde{\imath},\tilde{\jmath})}\expv{\Psi_\mathrm{F}}{P^A_{r,s}P^B_{\tilde{\imath},\tilde{\jmath}}}{\Psi_\mathrm{F}}
		=0,
		\end{aligned}
\end{multline}
where the final step follows because of the perfect correlations in $\Psi_\mathrm{F}$.

\paragraph{Proof for \eqref{TediousExpr}}
The first step is again to make use of parameter independence to get an estimate in terms of quantum probabilities.
\begin{multline}\label{FirstStepTedious}
	\int\left|\peetje_{\overline{A}_\theta}(a_{i,j}|\lambda)
		-\peetje_{\overline{B}_\phi}(b_{i,j}|\lambda)\right|\dee\mu_{\Psi_\mathrm{I}}(\lambda)\\
		\begin{aligned}
		\leq&
		\sum_{(r,s)\neq(i,j)}\int\peetje_{\overline{A}_\theta\overline{B}_\phi}(a_{i,j}b_{r,s}|\lambda)\dee\mu_{\Psi_\mathrm{I}}(\lambda)
		+
		\sum_{(r,s)\neq(i,j)}\int\peetje_{\overline{A}_\theta\overline{B}_\phi}(a_{r,s}b_{i,j}|\lambda)\dee\mu_{\Psi_\mathrm{I}}(\lambda)\\
		=&
		\sum_{(r,s)\neq(i,j)}\expv{U_\theta V_\phi\Psi_\mathrm{F}}{P^A_{i,j}P^B_{r,s}}{U_\theta V_\phi\Psi_\mathrm{F}}
		+
		\sum_{(r,s)\neq(i,j)}\expv{U_\theta V_\phi\Psi_\mathrm{F}}{P^A_{r,s}P^B_{i,j}}{U_\theta V_\phi\Psi_\mathrm{F}}.
		\end{aligned}
\end{multline}
The next step is to find a better expression for the first sum, and then use the similarity between the two sums to also obtain an expression for the second sum. 
For this look at the action of the projection operators:
\begin{equation}\label{StateEval}
	P^A_{i,j}U_\theta P^B_{r,s}V_\phi\Psi_\mathrm{F}
	=
	C_N\sum_{k=1}^D\sum_{\iota=1}^d\frac{c_\iota}{\sqrt{k}}
	P^A_{i,j}U_\theta P^B_{r,s}V_\phi
	\bket{\ceil*{\tfrac{k}{n_\iota^\epsilon}}}{\ceil*{\tfrac{k}{n_\iota^\epsilon}}}{j_{\iota,k}}{j_{\iota,k}}{\iota}{\iota}.
\end{equation}
The action of $P^A_{i,j}U_\theta$ on each component in the sum yields a non-zero result only if $(\iota,j_{\iota,k})=(i,j)$ or $(\iota,j_{\iota,k})=(\tilde{\imath},\tilde{\jmath})$.
For $\iota=i$, $j_{\iota,k}$ takes on the value $j$ exactly $\tfrac{D}{n^\epsilon_i}$ times in the sum over $k$.
On these occasions, $k$ takes on the values $j,j+n^\epsilon_i,\ldots,j+(\tfrac{D}{n^\epsilon_i}-1)n^\epsilon_i$.
For this reason the expression $k_{m,i,j}=j+(m-1)n^\epsilon_i$ is introduced.
The case for $\iota=\tilde{\imath}$ is similar.
Thus performing the sum over $\iota$ in \eqref{StateEval} gives
\begin{equation}
	P^A_{i,j}U_\theta P^B_{r,s}V_\phi\Psi_\mathrm{F}
	=
	C_N\sum_{m=1}^{\frac{D}{n^\epsilon_i}}
	\frac{c_i\cos\theta}{\sqrt{k_{m,i,j}}}P^B_{r,s}V_\phi
	\bket{m}{m}{j}{j}{i}{i}
	-
	C_N\sum_{m=1}^{\frac{D}{n^\epsilon_{\tilde{\imath}}}}
	\frac{c_{\tilde{\imath}}\sin\theta}{\sqrt{k_{m,\tilde{\imath},\tilde{\jmath}}}}P^B_{r,s}V_\phi
	\bket{m}{m}{j}{\tilde{\jmath}}{i}{\tilde{\imath}}
\end{equation}  

Because of the action of the projection operator $P^B_{r,s}$, the sum over $(r,s)\neq(i,j)$ only picks up a term when $(r,s)=(\tilde{\imath},\tilde{\jmath})$.
So
\begin{equation}
\begin{split}
	\sum_{(r,s)\neq(i,j)}P^A_{i,j}U_\theta P^B_{r,s}V_\phi\Psi_\mathrm{F}
	=&
	C_N\sum_{m=1}^{\frac{D}{n^\epsilon_i}}
	\frac{c_i\cos\theta}{\sqrt{k_{m,i,j}}}P^B_{\tilde{\imath},\tilde{\jmath}}V_\phi
	\bket{m}{m}{j}{j}{i}{i}
	-
	C_N\sum_{m=1}^{\frac{D}{n^\epsilon_{\tilde{\imath}}}}
	\frac{c_{\tilde{\imath}}\sin\theta}{\sqrt{k_{m,\tilde{\imath},\tilde{\jmath}}}}P^B_{\tilde{\imath},\tilde{\jmath}}V_\phi
	\bket{m}{m}{j}{\tilde{\jmath}}{i}{\tilde{\imath}}\\
	=&
	C_N\sum_{m=1}^{\frac{D}{n^\epsilon_i}}
	\frac{c_i\cos\theta\sin\phi}{\sqrt{k_{m,i,j}}}
	\bket{m}{m}{j}{\tilde{\jmath}}{i}{\tilde{\imath}}
	-
	C_N\sum_{m=1}^{\frac{D}{n^\epsilon_{\tilde{\imath}}}}
	\frac{c_{\tilde{\imath}}\sin\theta\cos\phi}{\sqrt{k_{m,\tilde{\imath},\tilde{\jmath}}}}
	\bket{m}{m}{j}{\tilde{\jmath}}{i}{\tilde{\imath}}.
\end{split}
\end{equation}
Bringing also the other rotations in \eqref{FirstStepTedious} to the right gives
\begin{equation}
\begin{split}
	\sum_{(r,s)\neq(i,j)}U^*_\theta P^A_{i,j}U_\theta V^*_\phi P^B_{r,s}V_\phi\Psi_\mathrm{F}
	=&
	C_N\sum_{m=1}^{\frac{D}{n^\epsilon_i}}
	\frac{c_i\cos\theta\sin\phi}{\sqrt{k_{m,i,j}}}\left(\cos\theta V_\phi^*\bket{m}{m}{j}{\tilde{\jmath}}{i}{\tilde{\imath}}
	-\sin\theta V_\phi^*\bket{m}{m}{\tilde{\jmath}}{\tilde{\jmath}}{\tilde{\imath}}{\tilde{\imath}}
	\right)\\
	&+
	C_N\sum_{m=1}^{\frac{D}{n^\epsilon_{\tilde{\imath}}}}
	\frac{c_{\tilde{\imath}}\sin\theta\cos\phi}{\sqrt{k_{m,\tilde{\imath},\tilde{\jmath}}}}\left(\sin\theta V_\phi^*\bket{m}{m}{\tilde{\jmath}}{\tilde{\jmath}}{\tilde{\imath}}{\tilde{\imath}}
	-\cos\theta V_\phi^*\bket{m}{m}{j}{\tilde{\jmath}}{i}{\tilde{\imath}}
	\vphantom{\sin^2\theta}\right).
\end{split}
\end{equation}
And finally
\begin{multline}\label{TediousExpr2}
	\sum_{(r,s)\neq(i,j)}U^*_\theta P^A_{i,j}U_\theta V^*_\phi P^B_{r,s}V_\phi\Psi_\mathrm{F}\\
	\begin{aligned}
	=&
	C_N\sum_{m=1}^{\frac{D}{n^\epsilon_i}}
	\frac{c_i\cos\theta\sin\phi}{\sqrt{k_{m,i,j}}}\left(
	\cos\theta\cos\phi\bket{m}{m}{j}{\tilde{\jmath}}{i}{\tilde{\imath}}
	+\cos\theta\sin\phi\bket{m}{m}{j}{j}{i}{i}\right.\\
	&\left.
	\hphantom{C_N\sum_{m=1}^{\frac{D}{n^\epsilon_i}}\frac{c_i\cos\theta\sin\phi}{\sqrt{k_{m,i,j}}}(}
	-\sin\theta\cos\phi\bket{m}{m}{\tilde{\jmath}}{\tilde{\jmath}}{\tilde{\imath}}{\tilde{\imath}}
	-\sin\theta\sin\phi\bket{m}{m}{\tilde{\jmath}}{j}{\tilde{\imath}}{i}\right)\\
	&+
	C_N\sum_{m=1}^{\frac{D}{n^\epsilon_{\tilde{\imath}}}}
	\frac{c_{\tilde{\imath}}\sin\theta\cos\phi}{\sqrt{k_{m,\tilde{\imath},\tilde{\jmath}}}}\left(
	\sin\theta\cos\phi\bket{m}{m}{\tilde{\jmath}}{\tilde{\jmath}}{\tilde{\imath}}{\tilde{\imath}}
	+\sin\theta\sin\phi\bket{m}{m}{\tilde{\jmath}}{j}{\tilde{\imath}}{i}\right.\\
	&\left.
	\hphantom{+C_N\sum_{m=1}^{\frac{D}{n^\epsilon_{\tilde{\imath}}}}\frac{c_{\tilde{\imath}}\sin\theta\cos\phi}{\sqrt{k_{m,\tilde{\imath},\tilde{\jmath}}}}(}
	-\cos\theta\cos\phi\bket{m}{m}{j}{\tilde{\jmath}}{i}{\tilde{\imath}}
	-\cos\theta\sin\phi\bket{m}{m}{j}{j}{i}{i}\right)
	\end{aligned}
\end{multline}

Because of the perfect correlations in $\Psi_\mathrm{F}$, only the four terms with $i,i$ and $\tilde{\imath},\tilde{\imath}$ in the final expression contribute to the sum over expectation values in \eqref{FirstStepTedious}.
This results in
\begin{multline}
	\sum_{(r,s)\neq(i,j)}
	\expv{U_\theta V_\phi\Psi_\mathrm{F}}{P^A_{i,j}P^B_{r,s}}{U_\theta V_\phi\Psi_\mathrm{F}}\\
	\begin{aligned}
	=&
	C_N^2\sum_{m=1}^{\frac{D}{n^\epsilon_i}}
	\frac{c_i^2}{k_{m,i,j}}\cos^2\theta\sin^2\phi
	-
	C_N^2\sum_{m=1}^{\frac{D}{n^\epsilon_i}}
	\frac{c_ic_{\tilde{\imath}}}{\sqrt{k_{m,i,j}k_{m,\tilde{\imath},\tilde{\jmath}}}}\cos\theta\sin\theta\cos\phi\sin\phi\\
	&+
	C_N^2\sum_{m=1}^{\frac{D}{n^\epsilon_{\tilde{\imath}}}}
	\frac{c_{\tilde{\imath}}^2}{k_{m,\tilde{\imath},\tilde{\jmath}}}\sin^2\theta\cos^2\phi
	-
	C_N^2\sum_{m=1}^{\frac{D}{n^\epsilon_{\tilde{\imath}}}}
	\frac{c_ic_{\tilde{\imath}}}{\sqrt{k_{m,i,j}k_{m,\tilde{\imath},\tilde{\jmath}}}}\cos\theta\sin\theta\cos\phi\sin\phi.
	\end{aligned}
\end{multline}
Making use of the trigonometric expression $\sin^2(\theta-\phi)=\cos^2\theta\sin^2\phi+\sin^2\theta\cos^2\phi-2\cos\theta\sin\theta\cos\phi\sin\phi$, the desired result \eqref{TediousExpr} follows.
\end{appendix}

\printbibliography

\end{document}